\def\A{\mathcal{A}}
\def\B{\mathcal{B}}
\def\phi{\varphi}
\def\dist{\mathit{Dist}}
\def\dirac#1{\delta_{#1}}
\def\epsilon{\varepsilon}
\newcommand*{\dotcup}{\ensuremath{\mathaccent\cdot\cup}}
\newcommand{\TRANA}[3]{#1\xrightarrow[]{#2}#3}
\newcommand{\TRANPA}[3]{#1\xrightarrow{#2}_{{\sf P}}#3}
\newcommand{\BSP}{\ensuremath{\sim_{{\sf P}}}}
\newcommand{\ABS}[1]{|#1|}
\def\<{\langle}
\def\>{\rangle}
\def\l{\mathcal{L}}
\def\k{\mathcal{K}}
\newcommand{\ysim}[1]{\stackrel{#1}\sim}
\def\z{\mathbf{0}}
\newcommand{\TRANDA}[3]{#1\xrightarrow{#2}_{{\sf D}}#3}
\def\supp{\mathit{supp}}
\begin{document}

\title{When Equivalence and Bisimulation Join Forces in Probabilistic
  Automata\thanks{Supported by the National Natural Science Foundation
    of China (NSFC) under grant No.\ 61361136002, and Australian Research Council (ARC)
under grant Nos. DP130102764 and FT100100218. Y. F. is also supported by the Overseas Team Program of Academy of Mathematics and Systems
Science, Chinese Academy of Sciences.}} \author{Yuan
  Feng$^{1,2}$ \and Lijun Zhang$^3$} \institute{ University of Technology
  Sydney, Australia \and
  Department of Computer Science and Technology, Tsinghua University, China\\
  \and State Key Laboratory of Computer Science, Institute of
  Software, Chinese Academy of Sciences }
\maketitle

\begin{abstract}
  Probabilistic automata were introduced by Rabin in 1963 as language
  acceptors. Two automata are equivalent if and only if they accept
  each word with the same probability.  On the other side, in the
  process algebra community, probabilistic automata were re-proposed
  by Segala in 1995 which are more general than Rabin's
  automata. Bisimulations have been proposed for Segala's automata to
  characterize the equivalence between them. So far the two notions of
  equivalences and their characteristics have been studied most
  independently.  In this paper, we consider Segala's automata, and
  propose a novel notion of distribution-based bisimulation by joining the existing
  equivalence and bisimilarities. Our bisimulation bridges the two
  closely related concepts in the community, and provides a uniform way
  of studying their characteristics. We demonstrate the utility of our
  definition by studying distribution-based bisimulation metrics,
  which gives rise to a robust notion of equivalence for Rabin's
  automata.
\end{abstract}

\section{Introduction}
In 1963, Rabin \cite{Rabin63} introduced the model \emph{probabilistic
  automata} as language acceptors. In a probabilistic automaton, each
input symbol determines a stochastic transition matrix over the state
space. Starting with the initial distribution, each word (a sequence
of symbols) has a corresponding probability of reaching one of the
final states, which is referred to the accepting probability. Two
automata are equivalent if and only if they accept each word with the
same probability. The corresponding decision algorithm has been
extensively studied, see~\cite{Rabin63,Tzeng92,KieferMOWW11,KieferMOWW12}.

Markov decision processes (MDPs) were known as early as the
1950s~\cite{Bellman57}, and are a popular modeling formalism used for
instance in operations research, automated planning, and decision
support systems. In MDPs, each state has a set of enabled actions and
each enabled action leads to a distribution over successor
states. MDPs have been widely used in the formal verification of
randomized concurrent systems, and are now supported by probabilistic
model checking tools such as PRISM \cite{KwiatkowskaNP11},
MRMC~\cite{KatoenZHHJ11} and IscasMC~\cite{HLSTZ14}.

On the other side, in the context of concurrent systems, probabilistic
automata were re-proposed by Segala in 1995~\cite{Segala-thesis},
which extend MDPs with internal nondeterministic choices.  Segala's
automata are more general than Rabin's automata, in the sense that
each input symbol corresponds to one, or more than one, stochastic transition
matrices. Various behavioral equivalences are defined, including
strong bisimulations, strong probabilistic bisimulations, and weak
bisimulation extensions~\cite{Segala-thesis}. These behavioral
equivalences are used as powerful tools for state space reduction and
hierarchical verification of complex systems. Thus, their decision
algorithms~\cite{CattaniS02,BaierEM00,HermannsT12} and logical
characterizations~\cite{ParmaS07,DesharnaisGJP10,HermannsPSWZ11} are
widely studied in the literature.

Equivalences are defined for the specific initial distributions over
Rabin's automata, whereas bisimulations are usually defined over
states. For Segala's automata, state-based bisimulations have arguably
too strong distinguishing power, thus in the recent literature,
various relaxations have been proposed.  The earliest such formulation
is a distribution-based bisimulation in~\cite{DoyenHR08}, which is
defined for Rabin's automata. This is essentially an equivalent
characterization of the equivalence in the coinductive
manner, as for bisimulations. Recently, in~\cite{EisentrautHZ10}, a
distribution-based weak bisimulation has been proposed, and the
induced distribution-based strong bisimulation is further studied
in~\cite{Hennessy12}. It is shown that the distribution-based strong
bisimulation agrees with the state-based bisimulations when lifted to
distributions.

To the best of the authors' knowledge, even the two notions
are closely related, so far their characteristics have been studied
independently. As the main contribution of this paper, we consider
Segala's probabilistic automata, and propose a novel notion of
distribution-based bisimulation by joining the existing equivalence
and bisimilarities. We show that for Rabin's probabilistic automata it
coincides with equivalences, and for Segala's probabilistic automata,
it is reasonably weaker than the existing bisimulation relation.
Thus, our bisimulations bridge the two closely related concepts in the
community, and provide a uniform way of studying their
characteristics.

We demonstrate the utility of our approach by studying distribution-based bisimulation metrics. Bisimulations for probabilistic systems
are known to be very sensitive to the transition probabilities: even a
tiny perturbation of the transition probabilities will destroy
bisimilarity. Thus, bisimulation metrics have been proposed~\cite{GiacaloneJS90}: the
distance between any two states are measured, and the smaller the
distance is, the more similar they are. If the distance is zero, one then has
 the classical bisimulation.  Because of the nice property of
robustness, bisimulation metrics have attracted a lot attentions on MDPs and
their extension with continuous state space, see
\cite{DesharnaisGJP99,AlfaroMRS07,DesharnaisGJP04,DesharnaisLT08,FernsPP11,ChenBW12,Fu12,BacciBLM13,ComaniciPP12}.

All of the existing bisimulation metrics mentioned above are state-based. On the other side, as states lead to distributions in MDPs,
the metrics must be lifted to distributions. In the second part of the
paper, we propose a distribution-based bisimulation metric; we consider
it being more natural as no lifting of distances is needed. We
provide a coinductive definition as well as a fixed point
characterization, both of which are used in defining the state-based
bisimulation metrics in the literature. We provide a logical
characterization for this metric as well, and discuss the relation
of our definition and the state-based ones.

A direct byproduct of our bisimulation-based metrics is the notion of
equivalence metrics for Rabin's probabilistic automata. As for
bisimulation metrics, the equivalence metric provides a robust
solution for comparing Rabin's automata. To the best of our knowledge,
this has not been studied in the literature.  We anticipate that more
solution techniques developed in one area can inspire solutions for
the corresponding problems in the other. 

\emph{Organization of the paper.}  We introduce some notations in
Section \ref{sec:pre}. Section \ref{sec:pa} recalls the definitions of
probabilistic automata, equivalence, and bisimulation relations. We
present our distribution-based bisimulation in Section
\ref{sec:novel}, and bisimulation metrics and their logical
characterizations in \ref{sec:metric}. Section \ref{sec:conclusion}
concludes the paper. 

\section{Preliminaries}\label{sec:pre}
\paragraph{Distributions.}
For a finite set $S$, a distribution is a function $\mu:S\to
[0,1]$ satisfying $\ABS{\mu}:=\sum_{s\in S}\mu(s)= 1$. We denote by
$\mathit{Dist}(S)$ the set of distributions over $S$. We shall use
$s,r,t,\ldots$ and $\mu,\nu\ldots$ to range over $S$ and
$\mathit{Dist}(S)$, respectively. Given a set of distributions
$\{\mu_i\}_{1\leq i\leq n}$, and a set of positive weights
$\{p_i\}_{1\leq i\leq n}$ such that $\sum_{1\leq i\leq n}p_i=1$, the
\emph{convex combination} $\mu=\sum_{1\leq i\leq n}p_i\cdot\mu_i$ is
the distribution such that $\mu(s)=\sum_{1\leq i\leq
  n}p_i\cdot\mu_i(s)$ for each $s\in S$. The support of $\mu$ is
defined by $\mathit{supp}(\mu):=\{s\in S \mid \mu(s)>0\}$. For an
equivalence relation $R$ defined on $S$, we write $\mu R\nu$ if it holds
that $\mu(C)=\nu(C)$ for all equivalence classes $C\in S/R$. A
distribution $\mu$ is called \emph{Dirac} if $|\mathit{supp}(\mu)|=1$,
and we let $\dirac{s}$ denote the Dirac distribution with
$\dirac{s}(s)=1$.

Note that when $S$ is finite, the distributions
$\dist(S)$ over $S$, when regarded as a subset of $\mathbb{R}^{|S|}$,
is both convex and compact. In this paper, when we talk about convergence of distributions,
or continuity of relations such as transitions, bisimulations, and pseudometrics 
between distributions, we are referring to the normal topology of $\mathbb{R}^{|S|}$.
For a set $F\subseteq S$, we define the
(column) characteristic vector $\eta_F$ by letting $\eta_F(s)=1$ if $s\in
F$, and 0 otherwise.

\paragraph{Pseudometric.}
A pseudometric over $\dist(S)$ is a function $d : \dist(S)\times \dist(S) \rightarrow [0,1]$ such that 
(i) $d(\mu, \mu) = 0$;
(ii) $d(\mu, \nu) = d(\nu, \mu)$;
(iii) $d(\mu, \nu) + d(\nu, \omega) \geq d(\mu, \omega)$.
In this paper, we assume that a pseudometric is continuous.

\section{Probabilistic Automata and Bisimulations}\label{sec:pa}
\subsection{Probabilistic Automata}
Let $AP$ be a finite set of atomic propositions. We recall the notion of 
probabilistic automata introduced by Segala~\cite{Segala-thesis}.
\begin{definition}[Probabilistic Automata]\label{def:automata}
  A \emph{probabilistic automaton} is a tuple
  $\A=(S, Act, \rightarrow,L,\alpha)$ where $S$ is a finite set of
  states, $Act$ is a finite set of actions, ${\rightarrow}\subseteq S\times Act\times \dist(S)$ is a
  transition relation, $L:S\to 2^{AP}$ is a labeling function, and $\alpha\in \dist(S)$ is an initial distribution.
\end{definition}

As usual we only consider image-finite probabilistic automata, i.e.
 for all $s\in S$, the set $\{\mu\mid (s,a,\mu)\in{\rightarrow}\}$ is finite. A
transition $(s,a,\mu)\in{\rightarrow}$ is denoted by
$\TRANA{s}{a}{\mu}$.  We denote by $Act(s):=\{a\mid
\TRANA{s}{a}{\mu}\}$ the set of enabled actions in $s$. We say $\A$ is
\emph{input enabled}, if $Act(s)=Act$ for all $s\in S$. We say $\A$ is
an MDP if $Act$ is a singleton.

Interestingly, a subclass of probabilistic automata were already
introduced by Rabin in 1963~\cite{Rabin63}; Rabin's probabilistic automata
were referred to as \emph{reactive automata} in~\cite{Segala-thesis}. We adopt this
convention in this paper.
\begin{definition}[Reactive Automata]
  We say $\A$ is \emph{reactive} if it is input enabled, and for all $s$, $L(s)\in
  \{\emptyset,AP\}$, and $\TRANA{s}{a}{\mu} \wedge
  \TRANA{s}{a}{\mu'} \Rightarrow \mu=\mu'$.
\end{definition}

Here the condition $L(s)\in \{\emptyset,AP\}$ implies that the states
can be partitioned into two equivalence classes according to their
labeling. Below we shall identify $F:=\{s\mid L(s)=AP\}$ as the set
of \emph{accepting states}, a terminology used in reactive automata.
In a reactive automaton, each action $a\in Act$ is enabled
precisely once for all $s\in S$, thus inducing a
stochastic matrix $M(a)$ satisfying $\TRANA{s}{a}{M(a)(s,\cdot)}$.

\subsection{Probabilistic Bisimulation and Equivalence}
First, we recall the definition of (strong) probabilistic
bisimulation for probabilistic automata~\cite{Segala-thesis}.
Let $\{\TRANA{s}{a}{\mu_i}\}_{i\in I}$ be a collection of transitions, and let $\{p_i\}_{i\in I}$ be a collection of probabilities
with $\sum_{i\in I}p_i=1$. Then $(s,a,\sum_{i\in I}p_i\cdot\mu_i)$ is
called a \emph{combined transition} and is denoted by
$\TRANPA{s}{a}{\mu}$ where $\mu=\sum_{i\in I}p_i\cdot\mu_i$.  

\begin{definition}[Probabilistic bisimulation~\cite{Segala-thesis}]\label{def:bis}
  An equivalence relation $R\subseteq S\times S$ is a 
  probabilistic bisimulation if $sR r$ implies that $L(s)=L(r)$, and for each
  $\TRANA{s}{a}{\mu}$, there exists a combined transition
  $\TRANPA{r}{a}{\nu}$ such that $\mu R \nu$.

We write $s\BSP r$ whenever there is a 
probabilistic bisimulation $R$ such that $sRr$.
\end{definition}

Recently, in~\cite{EisentrautHZ10}, a distribution-based weak
bisimulation has been proposed, and the induced distribution-based
strong bisimulation is further studied in~\cite{Hennessy12}. Their
bisimilarity is shown to be the same as $\BSP$ when lifted to
distributions.  Below we recall the definition of equivalence for
reactive automata introduced by Rabin~\cite{Rabin63}.

\begin{definition}[Equivalence for Reactive Automata~\cite{Rabin63}]
  Let $\A_i=(S_i,Act_i,\rightarrow_i,L_i,\alpha_i)$ with $i=1,2$ be
  two reactive automata with $Act_1=Act_2=:Act$, and
  $F_i=\{s\in S_i\mid L(s)=AP\}$ the set of final states for $\A_i$. We say
  $\A_1$ and $\A_2$ are equivalent if $\A_1(w)=\A_2(w)$
  for each $w\in Act^*$, where $\A_i(w):=\alpha_i M_i(a_1)\ldots M_i(a_k)
  \eta_{F_i}$ provided $w=a_1\ldots a_k$.
\end{definition}

Stated in plain english, $\A_1$ and $\A_2$ with the same set of
actions are equivalent iff for an arbitrary input $w$, 
the probabilities of \emph{absorbing} in $F_1$ and $F_2$ are the same. 

So far bisimulations and equivalences were studied most
independently.  The only exception we are aware is~\cite{DoyenHR08}, in which for
Rabin's probabilistic automata, a distribution-based bisimulation is
defined that generalizes both equivalence and bisimulations. 
\begin{definition}[Bisimulation for Reactive
  Automata~\cite{DoyenHR08}] \label{def:distribution_based} Let
  $\A_i=(S_i,Act_i,\rightarrow_i,L_i,\alpha_i)$ with $i=1,2$ be two
  given reactive automata with $Act_1=Act_2=:Act$, and $F_i$ the set
  of final states for $\A_i$. A relation $R\subseteq
  \mathit{Dist}(S_1) \times \mathit{Dist}(S_2)$ is a bisimulation if
  for each $\mu R\nu$ it holds (i) $\mu\cdot\eta_{F_1} = \nu \cdot
  \eta_{F_2}$, and (ii) $(\mu M_1(a)) R (\nu M_2(a))$ for all $a\in
  Act$.

We write $\mu\sim_d\nu$ whenever there is a 
bisimulation $R$ such that $\mu R\nu$.
\end{definition}

It is shown in~\cite{DoyenHR08} that two reactive automata are equivalent if
and only if their initial distributions are distribution-based
bisimilar according to the definition above.

\section{A Novel Bisimulation Relation}\label{sec:novel}
In this section we introduce a notion of distribution-based bisimulation
for Segala's automata by extending the bisimulation defined
in~\cite{DoyenHR08}. We shall show
the compatibility of our definition with previous ones in Subsection~\ref{sec:compatibility}, and
some properties of our bisimulation in Subsection~\ref{sec:property}.

For the first step of defining a distribution-based bisimulation, we need to extend the transitions 
starting from states to those starting from distributions. A natural candidate for such an extension is 
as follows: for a distribution $\mu$ to perform an action $a$, each
state in its support \emph{must} make a combined $a$-move. However,
this definition is problematic, as in Segala's general probabilistic automata,
action $a$ may not always be enabled in any support state of $\mu$.
In this paper, we deal with this problem by 
first defining the distribution-based bisimulation (resp. distances) for input enabled automata, for which
the transition between distributions can be naturally defined, and then
reducing the equivalence (resp. distances) of two distributions
in a general probabilistic automata to the bisimilarity (resp. distances) of these distributions
in an input enabled automata which is obtained from the original one by adding a \emph{dead} state.

To make our idea more rigorous, we need some notations. For
$A\subseteq AP$ and a distribution $\mu$, we define
$\mu(A):=\sum \{ \mu(s) \mid L(s) =A \}$, which is the probability of being in those state $s$ with label $A$.

 \begin{definition}\label{def:dptran}
   We write $\TRANA{\mu}{a}{\mu'}$ if for each
   $s\in \supp(\mu)$ there exists
   $\TRANPA{s}{a}{\mu_s}$ such that 
   $\mu'=\sum_s\mu(s)\cdot\mu_s$.
\end{definition}

We first present our distribution-based bisimulation for input enabled probabilistic automata.

\begin{definition}
  Let $\A=(S, Act, \rightarrow,L,\alpha)$ be an input enabled probabilistic automaton.  A symmetric
  relation $R \subseteq \dist(S)\times \dist(S)$ is a (distribution-based) bisimulation if $\mu R\nu$ implies that
  \begin{enumerate}
  \item $\mu(A)=\nu(A)$ for each $A\subseteq AP$, and
  \item for each $a\in Act$, whenever $\TRANA{\mu}{a}{\mu'}$ then there exists a transition
  $\TRANA{\nu}{a}{\nu'}$ such that $\mu' R\nu'$.
  \end{enumerate}
We write $\mu\sim^\A\nu$ if there is a 
  bisimulation $R$ such that $\mu R \nu$.
\end{definition}

Obviously, the bisimilarity $\sim^\A$ is the largest bisimulation relation. 

For probabilistic automata which are not input enabled, we define distribution-based bisimulation with the help of
\emph{input enabled extension} specified as follows.

\begin{definition}
Let $\A=(S, Act, \rightarrow,L,\alpha)$ be a probabilistic automaton over $AP$.
The \emph{input enabled extension} of $\A$, denoted by $\A_\bot$, is defined as
an (input enabled) probabilistic automaton $(S_\bot, Act, \rightarrow^\bot, L_\bot, \alpha)$ over $AP_\bot$ where
\begin{enumerate}
\item  $S_\bot = S\cup \{\bot\}$ where $\bot$ is a \emph{dead} state
not in $S$;
\item $AP_\bot=AP\cup \{dead\}$ with $dead\not\in AP$;
\item $\rightarrow^\bot {=} \rightarrow \cup~ \{(s, a, \dirac{\bot}) \mid a\not \in Act(s)\} \cup \{(\bot, a, \dirac{\bot}) \mid a \in Act\}$;
\item $L_\bot(s) = L(s)$ for any $s\in S$, and $L_\bot(\bot)=\{dead\}$.
\end{enumerate} 
\end{definition}

\begin{definition}
  Let $\A$ be a probabilistic automaton which is not input enabled. Then 
  $\mu$ and $\nu$ are bisimilar, denoted by $\mu\sim^\A\nu$, 
  if $\mu\sim^{\A_\bot}\nu$ in $\A_\bot$.
\end{definition}

We always omit the superscript $\A$ in $\sim^\A$ when no confusion arises.

\subsection{Compatibility}\label{sec:compatibility}
In this section we instantiate appropriate labeling functions and show
that our notion of bisimilarity is a conservative extension of both
probabilistic bisimulation~\cite{Rabin63} 
and equivalence relations~\cite{DoyenHR08}.

\begin{lemma}\label{lem:bsp}
 Let $\A$ be a probabilistic automaton where $AP=Act$, and $L(s) =Act(s)$ for each $s$. Then, $\mu\BSP \nu$ implies
 $\mu \sim \nu$.
\end{lemma}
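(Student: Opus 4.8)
The plan is to build an explicit distribution-based bisimulation on the input-enabled extension $\A_\bot$ that contains the pair $(\mu,\nu)$; since $\mu\sim^\A\nu$ means by definition $\mu\sim^{\A_\bot}\nu$, this suffices. The candidate is $R:=\{(\mu',\nu')\in\dist(S_\bot)\times\dist(S_\bot)\mid \mu'\BSP\nu'\}$, where $\BSP$ is computed in $\A_\bot$ --- that is, $R$ is the lifting to distributions of probabilistic bisimilarity on $\A_\bot$. First I would relate $\BSP$ in $\A$ to $\BSP$ in $\A_\bot$: writing $R_0$ for probabilistic bisimilarity on $\A$, the relation $R_0\cup\{(\bot,\bot)\}$ is a probabilistic bisimulation on $\A_\bot$. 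Indeed $s R_0 r$ forces $L(s)=L(r)$, hence $Act(s)=Act(r)$ by the hypothesis that $L(t)=Act(t)$ for every state $t$, so the added transitions $(s,a,\dirac{\bot})$ for $a\notin Act(s)$ coincide for $s$ and $r$ and are matched trivially; the original transitions are matched exactly as in $\A$ (and the matching combined transitions stay inside $\dist(S)$); and $\bot$, being the only state labelled $\{dead\}$, is matched only with itself. Consequently the $\BSP$-classes of $\A$ refine those of $\A_\bot$, and since $\mu,\nu$ put no mass on $\bot$, the hypothesis $\mu\BSP\nu$ in $\A$ gives $\mu\BSP\nu$ in $\A_\bot$, i.e.\ $\mu\mathrel{R}\nu$.

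Next I would verify that $R$ is a distribution-based bisimulation on $\A_\bot$. It is symmetric because $\BSP$ is an equivalence. Clause~(1): if $\mu'\BSP\nu'$ in $\A_\bot$ then $\mu'(C)=\nu'(C)$ for every $\BSP$-class $C$, and since all states of one class carry the same label and $AP_\bot$ is finite, summing over the classes with a given label $A$ gives $\mu'(A)=\nu'(A)$. The work is in clause~(2). Assume $\TRANA{\mu'}{a}{\mu''}$, so $\mu''=\sum_{s\in\supp(\mu')}\mu'(s)\cdot\mu_s$ with $\TRANPA{s}{a}{\mu_s}$ for each such $s$. Group $\supp(\mu')$ by $\BSP$-class and put $\bar\mu_C:=\frac{1}{\mu'(C)}\sum_{s\in C}\mu'(s)\cdot\mu_s$ for each class $C$ with $\mu'(C)>0$, so that $\mu''=\sum_C\mu'(C)\cdot\bar\mu_C$; as $\mu'(C)=\nu'(C)$, exactly the same classes are met by $\nu'$. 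Fix $r\in\supp(\nu')$ lying in such a class $C$; for every $s\in C$ the standard transfer property of probabilistic bisimulation --- extended from ordinary to combined transitions using that combined transitions and lifted $\BSP$ are closed under convex combination --- supplies a combined transition $\TRANPA{r}{a}{\nu_{r,s}}$ with $\nu_{r,s}\BSP\mu_s$. Setting $\nu_r:=\frac{1}{\mu'(C)}\sum_{s\in C}\mu'(s)\cdot\nu_{r,s}$ we still have $\TRANPA{r}{a}{\nu_r}$ and $\nu_r\BSP\bar\mu_C$. Finally put $\nu'':=\sum_{r\in\supp(\nu')}\nu'(r)\cdot\nu_r$: then $\TRANA{\nu'}{a}{\nu''}$ holds in $\A_\bot$ (every action is enabled there), and regrouping by class together with $\mu'(C)=\nu'(C)$ yields $\nu''\BSP\sum_C\mu'(C)\cdot\bar\mu_C=\mu''$, i.e.\ $\mu''\mathrel{R}\nu''$.

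Combining the two steps, $R$ is a distribution-based bisimulation on $\A_\bot$ with $\mu\mathrel{R}\nu$, hence $\mu\sim^{\A_\bot}\nu$ and therefore $\mu\sim^\A\nu$. The one delicate point --- and the main obstacle --- is clause~(2): a single distribution transition commits every state of $\supp(\mu')$ to its own combined move, so the matching move from $\nu'$ cannot be chosen pointwise but must be reassembled by averaging, within each equivalence class, the pointwise matches provided by state-level bisimilarity. The two facts that make this averaging legitimate --- that combined transitions are closed under convex combination, and that lifted probabilistic bisimilarity is both preserved along combined transitions and closed under convex combination --- I would isolate as a short auxiliary lemma before the main argument.
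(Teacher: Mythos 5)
Your proposal is correct and follows essentially the same route as the paper's proof: reduce to the input-enabled extension (using $L(s)=Act(s)$ to see that $\BSP$ is unaffected by adding $\bot$), take $R$ to be lifted probabilistic bisimilarity, and match a distribution transition by averaging, within each $\BSP$-class $C$, the pointwise matching combined transitions with weights $\mu'(s)/\mu'(C)$ before reassembling $\nu''=\sum_r\nu'(r)\cdot\nu_r$. The only (cosmetic) difference is that you verify $\mu''\mathrel{R}\nu''$ by invoking convex-combination closure of lifted $\BSP$, where the paper instead checks $\nu''(M)=\mu''(M)$ for each class $M$ by direct computation; your explicit remark that the transfer property must be extended from ordinary to combined transitions is a point the paper leaves implicit.
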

\begin{proof}
First, it is easy to see that for a given probabilistic automata $\A$ with $AP=Act$ and $L(s) =Act(s)$ for each $s$, and distributions $\mu$ and $\nu$ in $\dist(S)$, 
$\mu \BSP\nu$ in $\A$ if and only if $\mu \BSP\nu$ in the input enabled extension $\A_\bot$. Thus 
we can assume without loss of any generality that $\A$ itself is input enabled. 

It suffices to show that the symmetric relation
$$R = \{ (\mu, \nu) \mid \mu \BSP\nu \}$$
is a bisimulation. For each $A\subseteq Act$, let $S(A) = \{s\in S \mid  L(s)= A\}$. Then $S(A)$ is the disjoint union of some equivalence classes of $\BSP$; that is, $S(A) = \dotcup\{M\in S/{\BSP} \mid M\cap S(A) \neq \emptyset\}$. Suppose $\mu\BSP\nu$. Then for any $M\in S/{\BSP}$, $\mu(M) = \nu(M)$, hence $\mu(A) = \mu(S(A)) = \nu(S(A)) = \nu(A).$ 

Let $\TRANA{\mu}{a}{\mu'}$. Then for any $s\in S$ there exists $\TRANPA{s}{a}{\mu_s}$ such that
$$\mu' = \sum_{s\in S} \mu(s)\cdot \mu_s.$$
Now for each $t\in S$, let $[t]_{\BSP}$ be the equivalence class of $\BSP$ which contains $t$. Then for every $s \in [t]_{\BSP}$, to match the transition $\TRANPA{s}{a}{\mu_s}$ there exists some $\nu_t^s$ such that $\TRANPA{t}{a}{\nu^s_t}$ and $\mu_s\BSP \nu^s_t$. Let 
$$\nu_t = \sum_{s\in [t]_{\BSP}}\frac{\mu(s)}{\mu([t]_{\BSP})}\cdot \nu_t^s.$$
Then we have $\TRANPA{t}{a}{\nu_t}$, and $\TRANA{\nu}{a}{\nu'}$ where 
$$\nu' := \sum_{t\in S} \nu(t) \cdot\nu_t.$$
It remains to prove $\mu' \BSP \nu'$. For any $M\in S/{\BSP}$, since $\mu_s\BSP \nu^s_t$ we have 
\begin{eqnarray*}
\nu_t(M) &=& \sum_{s\in [t]_{\BSP}}\frac{\mu(s)}{\mu([t]_{\BSP})} \nu_t^s(M)= \sum_{s\in [t]_{\BSP}}\frac{\mu(s)}{\mu([t]_{\BSP})} \mu_s(M).
\end{eqnarray*} 
Thus
\begin{eqnarray*}
\nu'(M) &=&  \sum_{t\in S} \nu(t) \sum_{s\in [t]_{\BSP}}\frac{\mu(s)}{\mu([t]_{\BSP})} \mu_s(M)\\
&=& \sum_{s\in S}  \mu(s)\mu_s(M)\sum_{t\in [s]_{\BSP}}\frac{\nu(t)}{\nu([s]_{\BSP})}\\
&=& \sum_{s\in S} \mu(s) \mu_s(M)  = \mu'(M)
\end{eqnarray*} 
where for the second equality we have used the fact that $\mu([t]_{\BSP}) = \nu([s]_{\BSP})$ for any $s\BSP t$. \qed
\end{proof}

Probabilistic bisimulation is defined over states inside one
automaton, whereas equivalence and distribution for reactive automata are
defined over two automata. However, they can be connected by the notion of
direct sum of two automata, which is the automaton obtained by
considering the disjoint union of states, edges and labeling functions
respectively.

\begin{lemma}\label{lem:eqra}
  Let $\A_1$ and $\A_2$ be two reactive automata with the same set of
  actions $Act$. Let $F_i=\{s\in S_i \mid L(s)=AP\}$. Then, the following are equivalent:
  \begin{enumerate}
  \item $\A_1$ and
  $\A_2$ are equivalent,
\item $\alpha_1\sim_d\alpha_2$, 
\item $\alpha_1\sim \alpha_2$ in their direct sum. 
  \end{enumerate}
\end{lemma}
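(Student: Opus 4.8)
The plan is to prove the chain of equivalences by establishing $(1)\Leftrightarrow(2)$ directly from the literature and then $(2)\Leftrightarrow(3)$ by translating between the two flavours of distribution-based bisimulation. For $(1)\Leftrightarrow(2)$, I would simply invoke the result of~\cite{DoyenHR08} recalled right after Definition~\ref{def:distribution_based}, which states that two reactive automata are equivalent iff their initial distributions are related by $\sim_d$; so no real work is needed here beyond citing it. The substance of the lemma is therefore the equivalence $(2)\Leftrightarrow(3)$, i.e.\ that $\sim_d$ (defined across the two automata, using the characteristic-vector formulation) coincides with the new bisimilarity $\sim$ applied to the initial distributions in the direct sum $\A_1\uplus\A_2$.

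For $(2)\Rightarrow(3)$, I would take a $\sim_d$-bisimulation $R\subseteq\dist(S_1)\times\dist(S_2)$ witnessing $\alpha_1\sim_d\alpha_2$ and build from it a relation $R'$ on $\dist(S_1\uplus S_2)$ by setting $R' = \{(\mu,\nu)\mid (\mu\restriction S_1,\ \mu\restriction S_2 \text{ etc.})\}$; more precisely, since a distribution in the direct sum splits uniquely as a subconvex combination of a distribution on $S_1$ and one on $S_2$, I would close $R$ under this embedding (and symmetrize). I then check the two bisimulation clauses. Clause~1 ($\mu(A)=\nu(A)$ for each $A\subseteq AP$, where now $AP$ has been instantiated so that the two label classes $\emptyset, AP$ encode membership in $F_i$): this is exactly condition~(i) of Definition~\ref{def:distribution_based}, $\mu\cdot\eta_{F_1}=\nu\cdot\eta_{F_2}$, once one observes that $\mu(AP)=\mu\cdot\eta_{F_1}$. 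Clause~2: in a reactive automaton each state has a unique $a$-successor, so for a distribution $\mu$ there is a \emph{unique} transition $\TRANA{\mu}{a}{\mu M_1(a)}$, and likewise on the $\nu$ side; thus matching the move reduces precisely to condition~(ii), $(\mu M_1(a))\mathrel{R}(\nu M_2(a))$. The direction $(3)\Rightarrow(2)$ is the mirror image: given a bisimulation $R'$ on the direct sum relating $\alpha_1$ and $\alpha_2$, restrict it to pairs $(\mu,\nu)$ with $\supp(\mu)\subseteq S_1$ and $\supp(\nu)\subseteq S_2$, and verify the two conditions of Definition~\ref{def:distribution_based} using the same two observations (label classes $\leftrightarrow$ characteristic vectors, and uniqueness of reactive transitions $\leftrightarrow$ the matrix action).

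Two technical points deserve care and are where I expect the only friction. First, one must confirm that passing to the input-enabled extension is harmless here: reactive automata are by definition input enabled, so $\A_\bot = \A$ (no dead state is added) for each $\A_i$, and hence also for the direct sum; I would note this explicitly so that $\sim$ in clause~(3) is literally the input-enabled bisimilarity with no $\bot$-bookkeeping. Second, one must pin down the instantiation of $AP$ and $L$ for the direct sum so that ``$\mu(A)=\nu(A)$ for each $A\subseteq AP$'' says neither more nor less than ``equal mass in $F_1$ and $F_2$'': since a reactive automaton only uses labels $\emptyset$ and $AP$, there are effectively two classes, and equality on both is equivalent to equality on one, giving exactly $\mu\cdot\eta_{F_1}=\nu\cdot\eta_{F_2}$. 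The main obstacle, such as it is, is bookkeeping rather than mathematics: making the embedding $\dist(S_i)\hookrightarrow\dist(S_1\uplus S_2)$ and its interaction with the (deterministic) transition structure precise enough that the two bisimulation clauses translate term-for-term into conditions~(i) and~(ii). Once that correspondence is set up, both implications are immediate.
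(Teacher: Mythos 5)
Your proposal is correct and follows essentially the same route as the paper: the paper also disposes of $(1)\Leftrightarrow(2)$ by citing~\cite{DoyenHR08} and of $(2)\Leftrightarrow(3)$ by observing that for reactive automata the new bisimulation degenerates to Definition~\ref{def:distribution_based}. Your write-up merely spells out the details the paper leaves implicit (input-enabledness making $\A_\bot=\A$, uniqueness of reactive transitions collapsing combined moves to the matrix action $\mu M(a)$, and the two label classes reducing clause~1 to $\mu\cdot\eta_{F_1}=\nu\cdot\eta_{F_2}$), all of which check out.
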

\begin{proof}
  The equivalence between (1) and (2) is shown in~\cite{DoyenHR08}. 
  The equivalence between (2) and (3) is straightforward, as for
  reactive automata our definition degenerates to Definition \ref{def:distribution_based}. \qed
\end{proof}

To conclude this section, we present an example to show that our bisimilarity is \emph{strictly} weaker than
$\BSP$.
\begin{example}\label{exa:lmc}
  Consider the example probabilistic automaton depicted in
  Fig.~\ref{fig:exam1}, which is inspired from an example
  in~\cite{DoyenHR08}. 
 Let $AP=Act=\{a\}$, $L(s) =Act(s)$ for each $s$, and $\epsilon_1=\epsilon_2=0$. We argue that
  $q\not\BSP q'$. Otherwise, note $\TRANA{q}{a}{\frac12 \dirac{r_1} +
    \frac12 \dirac{r_2}}$ and $\TRANA{q'}{a}{\dirac{r'}}$. Then we must have $r' \BSP r_1
  \BSP r_2$. This is impossible, as $\TRANA{r_1}{a}{\frac23
    \dirac{s_1} + \frac13 \dirac{s_2}}$ and $\TRANA{r'}{a}{\frac12
    \dirac{s_1'} + \frac12 \dirac{s_2'}}$, but $s_1\BSP s_1'\not\BSP s_2\BSP s_2'$.

However, by our definition of bisimulation, the Dirac distributions $\dirac{q}$ and $\dirac{q'}$ are indeed bisimilar. The reason is, we have the following transition
$$\TRANA{\frac12 \dirac{r_1} + \frac12\dirac{r_2}}{a}{\frac13 \dirac{s_1} + \frac16 \dirac{s_2} + \frac16 \dirac{s_3} + \frac13 \dirac{s_4}},$$
and it is easy to check $\dirac{s_1}\sim \dirac{s_3}\sim \dirac{s_1'}$ and $\dirac{s_2}\sim \dirac{s_4}\sim \dirac{s_2'}$. Thus we have $\frac12 \dirac{r_1} + \frac12 \dirac{r_2} \sim \dirac{r'}$, and finally $\dirac{q}\sim \dirac{q'}$.
\begin{figure}[tbh]
\begin{center} \scalebox{0.85}{
    \begin{gpicture}[name=loop](40,28)(0,0)
  \node(q)(20,20){$q$}
  \node(r1)(0,0){$r_1$}
  \node(r2)(40,0){$r_2$}
  \drawedge[ELside=r](q,r1){$a$, $\frac12$}
  \drawedge(q,r2){$a$, $\frac12$}
  \node(s1)(-10,-20){$s_1$}
  \node(s2)(7.5,-20){$s_2$}
  \node(s3)(32.5,-20){$s_3$}
  \node(s4)(50,-20){$s_4$}
  \drawedge[ELside=r](r1,s1){$a$, $\frac23+\epsilon_1$}
  \drawloop[loopdiam=5,loopangle=180](s1){$a$}
  \drawedge(r1,s2){$a$, $\frac13-\epsilon_1$}
  \drawloop[loopdiam=5,loopangle=180](s3){$a$}
  \drawedge[ELside=r](r2,s3){$a$, $\frac13-\epsilon_2$}
  \drawedge(r2,s4){$a$, $\frac23+\epsilon_2$}
\end{gpicture} 
\hspace{2cm}
   \begin{gpicture}[name=loop](40,28)(0,0)
  \node(q)(20,20){$q'$}
  \node(r)(20,0){$r'$}
  \drawedge[ELside=r](q,r){$a$, $1$}
  \node(s1)(10,-20){$s_1'$}
  \drawloop[loopdiam=5,loopangle=180](s1){$a$}
  \node(s2)(30,-20){$s_2'$}
  \drawedge[ELside=r](r,s1){$a$, $\frac12$}
  \drawedge(r,s2){$a$, $\frac12$}
\end{gpicture}}
\end{center}
\caption{An illustrating example in which state labelings are defined by $L(s)=Act(s)$.}\label{fig:exam1}
\end{figure}
\end{example}

\subsection{Properties of the Relations}\label{sec:property}
In the following, we show that the notion of bisimilarity is in harmony with the linear combination and the limit of distributions. 

\begin{definition}\label{def:decom}
 A binary relation $R \subseteq
  \dist(S) \times\dist(S)$
 is said to be 
 \begin{itemize}
 \item \emph{linear}, if for any finite set $I$ and any probabilistic distribution $\{p_i\}_{i\in I}$, $\mu_i R \nu_i$ for each $i$
 implies
 $(\sum_{i\in I}p_i\cdot\mu_i)R(\sum_{i\in I}p_i\cdot\nu_i)$;
 \item \emph{continuous}, if for any convergent sequences of distributions $\{\mu_i\}_i$ and $\{\nu_i\}_i$, $\mu_i R \nu_i$ for each $i$ implies $(\lim_i \mu_i) R (\lim_i\nu_i)$;
 \item \emph{left-decomposable}, if
 $(\sum_{i\in I}p_i\cdot\mu_i)R\nu$, where $0< p_i \leq 1$ and $\sum_{i\in I}p_i = 1$, then $\nu$ can be written as $\sum_{i\in I}p_i\cdot\nu_i$
    such that $\mu_i R\nu_i$ for every $i \in I$.
\item \emph{left-convergent}, if
 $(\lim_i \mu_i)R\nu$, then for any $i$ we have $\mu_i R\nu_i$ for some $\nu_i$ with $\lim_i \nu_i = \nu$.
\end{itemize}
\end{definition}

We prove below that our transition relation between distributions satisfies these properties.
 
\begin{lemma}\label{lem:tranld} For an input enabled probabilistic automata, the transition relation $\TRANA{}{a}{}$ between distributions is linear, continuous, left-decomposable,
and left-convergent.
\end{lemma}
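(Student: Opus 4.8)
The plan is to verify each of the four properties of $\TRANA{}{a}{}$ directly from Definition~\ref{def:dptran}, which says that $\TRANA{\mu}{a}{\mu'}$ holds iff there is a family of combined transitions $\TRANPA{s}{a}{\mu_s}$ (for $s\in\supp(\mu)$) with $\mu'=\sum_s\mu(s)\cdot\mu_s$. Since $\A$ is input enabled, $a\in Act(s)$ for every $s$, so such combined transitions always exist, and the only real content is bookkeeping with the convex weights. I would first record the basic \emph{closure under combined transitions} fact: if $\TRANPA{s}{a}{\mu_s}$ for each $s$ and $\{q_s\}$ is a probability distribution, then $\TRANPA{s}{a}{\sum_s q_s\mu_s}$ is again a combined transition from $s$ only when all these are from the same state — the relevant version here is that for a fixed $s$, the set $\{\nu : \TRANPA{s}{a}{\nu}\}$ is convex and (by image-finiteness) compact. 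This is the workhorse for both linearity and decomposability.

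For \textbf{linearity}, suppose $\TRANA{\mu_i}{a}{\mu_i'}$ via $\TRANPA{s}{a}{\mu_{i,s}}$ for $s\in\supp(\mu_i)$, and let $\mu=\sum_i p_i\mu_i$, $\mu'=\sum_i p_i\mu_i'$. For each $s\in\supp(\mu)$ let $I_s=\{i: s\in\supp(\mu_i)\}$ and set $\mu_s=\sum_{i\in I_s}\frac{p_i\mu_i(s)}{\mu(s)}\cdot\mu_{i,s}$; this is a convex combination of combined $a$-transitions from $s$, hence a combined $a$-transition from $s$, and a direct rearrangement of sums gives $\sum_s\mu(s)\mu_s=\sum_i p_i\mu_i'=\mu'$, so $\TRANA{\mu}{a}{\mu'}$. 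For \textbf{left-decomposability}, suppose $\TRANA{\sum_i p_i\mu_i}{a}{\nu}$ via $\TRANPA{s}{a}{\nu_s}$; since $\TRANPA{s}{a}{\nu_s}$ is a combined transition it decomposes as a convex combination of single transitions $\TRANA{s}{a}{\cdot}$, but more simply one can just reuse the \emph{same} $\nu_s$ for each $i$: define $\nu_i=\sum_{s\in\supp(\mu_i)}\mu_i(s)\cdot\nu_s$, which witnesses $\TRANA{\mu_i}{a}{\nu_i}$, and check $\sum_i p_i\nu_i=\sum_s(\sum_{i}p_i\mu_i(s))\nu_s=\sum_s\mu(s)\nu_s=\nu$.

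For \textbf{continuity} and \textbf{left-convergence}, the subtlety is that $\supp(\mu_i)$ may jump around along the sequence, so I would pass to a subsequence on which the supports stabilize (finiteness of $S$ makes only finitely many support sets possible) and, using image-finiteness of $\to$, further stabilize which finitely many single transitions $\TRANA{s}{a}{\mu_{s,1}},\dots,\TRANA{s}{a}{\mu_{s,k_s}}$ are being combined at each $s$; then a combined transition from $s$ is described by a point in the compact simplex of coefficients, and $\TRANA{\mu}{a}{\mu'}$ is described by a point in a product of such simplices together with $\mu$ itself. Continuity then follows because $\mu'$ is a jointly continuous (indeed bilinear) function of $\mu$ and these coefficient vectors, plus compactness to extract convergent coefficients from the approximating transitions; the limit support dominates the eventually-constant support, which is what makes the limiting witness well defined on the possibly larger $\supp(\lim_i\mu_i)$. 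Left-convergence is the same construction read in reverse: given $\TRANA{\lim_i\mu_i}{a}{\nu}$ with witnesses $\nu_s$, set $\nu_i=\sum_{s\in\supp(\mu_i)}\mu_i(s)\nu_s$ (again reusing the same combined transitions, valid since $\supp(\mu_i)\subseteq\supp(\lim_i\mu_i)$ for large $i$, and for small $i$ one picks arbitrary combined transitions on the extra states), so that $\nu_i\to\nu$ by continuity of the map $\mu\mapsto\sum_s\mu(s)\nu_s$.

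I expect the main obstacle to be the handling of supports in the continuity/left-convergence arguments: one must be careful that $\supp(\mu_i)$ need not be constant and need not equal $\supp(\lim_i\mu_i)$ (it can only shrink in the limit when mass vanishes), so the witnessing combined transitions for the limit must be defined on a potentially larger support, and for the approximants one may need to invoke combined transitions on states that carry no mass. Input-enabledness is exactly what guarantees these "extra" combined transitions exist, and the subsequence-stabilization trick plus compactness of the simplices of combining coefficients is what turns the otherwise awkward limiting argument into a routine one. The linearity and left-decomposability parts are pure algebra with convex weights and should be only a few lines each.
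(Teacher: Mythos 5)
Your proof is correct and follows essentially the same route as the paper's: the same convex recombination $\mu_s=\sum_i\frac{p_i\,\mu_i(s)}{\mu(s)}\cdot\mu_{i,s}$ for linearity, reuse of the witnessing $\nu_s$ for left-decomposability, and compactness/closedness of the finitely generated sets of combined transitions together with subsequence extraction for continuity and left-convergence (where you are in fact more explicit than the paper about the support bookkeeping). One tiny slip: in the left-convergence discussion the inclusion should be $\supp(\lim_i\mu_i)\subseteq\supp(\mu_i)$ for large $i$ rather than the reverse (supports can only shrink in the limit as mass vanishes), but the remedy you describe --- arbitrary combined transitions on the extra states, available by input-enabledness and carrying vanishing total mass --- is exactly what is needed, so the argument stands.
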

\begin{proof}
 \begin{itemize}
 \item \emph{Linearity}. Let $I$ be a finite index set and $\{p_i\mid i\in I\}$ a probabilistic distribution on $I$. 
Suppose $\TRANA{\mu_i}{a}{\nu_i}$ for each $i \in I$.  Then by definition, for each
   $s$ there exists
   $\TRANPA{s}{a}{\mu^i_s}$ such that 
   $\nu_i=\sum_{s}\mu_i(s)\cdot\mu^i_s$.
Now let $\mu=\sum_{i\in I}p_i\cdot\mu_i$. Then for each $s\in \supp(\mu)$,
$$\TRANPA{s}{a}\mu_s:=\sum_{i\in I}{\frac{p_i\mu_i(s)}{\mu(s)}\cdot \mu^i_s}.$$
 On the other hand, we check that 
\begin{eqnarray*}\nu := \sum_{i\in I} p_i \cdot \nu_i &=& 
\sum\limits_{s\in S}\sum_{i\in I} p_i\mu_i(s)\cdot\mu^i_s = \sum\limits_{s\in S }\mu(s)\cdot\mu_s .
\end{eqnarray*}
 Thus $\TRANA{\mu}{a}{\nu}$ as expected.
 
\item \emph{Continuity}. Suppose $\TRANA{\mu_i}{a}{\nu_i}$ for each $i \in I$, and $\lim_i \mu_i = \mu$.  By definition, for each
   $s$ there exists
   $\TRANPA{s}{a}{\mu^i_s}$ such that 
   $\nu_i=\sum_{s}\mu_i(s)\cdot\mu^i_s$.
Note that $\dist(S)$ is a compact set. For each $s$ we can choose a convergent subsequence $\{\mu_s^{i_k}\}_k$ of $\{\mu_s^{i}\}_i$ such that
$\lim_k \mu_s^{i_k} = \mu_s$ for some $\mu_s$. Then $\TRANPA{s}{a}{\mu_s}$, and 
\begin{eqnarray*}\TRANA{\mu}{a}{\nu} :=  \sum_{s\in S}\mu(s)\cdot\mu_s.
\end{eqnarray*}
Note that for each $k$,
 \begin{eqnarray*} \|\nu_{i_k} - \nu\|_1 &\leq & \|\mu_{i_k} - \mu\|_1 + \sum\limits_{s\in S} \mu(s) \|\mu^{i_k}_s - \mu_s\|_1
\end{eqnarray*}
where $\|\cdot\|_1$ denotes the $l_1$-norm.
We have $\nu=\lim_k\nu_{i_k}$ by the assumption that $\lim_i \mu_i = \mu$. 
Thus $\lim_i \nu_i = \nu$, as $\{\nu_i\}_i$ itself converges.

 \item \emph{Left-decomposability}. Let
 $\mu:=\TRANA{(\sum_{i\in I}p_i\cdot\mu_i)}{a}{\nu}$. Then by definition, for each
   $s$ there exists
   $\TRANPA{s}{a}{\mu_s}$ such that 
   $\nu=\sum_{s}\mu(s)\cdot\mu_s$. Thus 
$$\TRANA{\mu_i}{a}\nu_i:=\sum\limits_{s\in S }\mu_i(s)\cdot\mu_s.$$
Finally, it is easy to show that $\sum_{i\in I} p_i \cdot \nu_i = \nu$.  
 
 \item \emph{Left-convergence}. Similar to the last case.   \qed
\end{itemize}
\end{proof}
%
%
%

\begin{theorem}\label{thm:ld}
The bisimilarity relation $\sim$ is both linear and continuous.
\end{theorem}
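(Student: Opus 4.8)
The plan is to prove linearity and continuity of $\sim$ by exhibiting, in each case, a concrete relation that contains the desired pair and verifying it is a bisimulation, leveraging the structural properties of $\TRANA{}{a}{}$ established in Lemma~\ref{lem:tranld}. Since for a non-input-enabled automaton $\sim^\A$ is defined via $\sim^{\A_\bot}$, and the extension operation commutes with convex combinations and limits of distributions, it suffices to treat the input enabled case; I would note this reduction at the start.

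\emph{Linearity.} Given a finite probabilistic distribution $\{p_i\}_{i\in I}$ with $\mu_i\sim\nu_i$ for each $i$, I would show the symmetric closure of
$$R=\Bigl\{\Bigl(\sum_{i\in I}p_i\cdot\mu_i,\ \sum_{i\in I}p_i\cdot\nu_i\Bigr)\ \Big|\ \mu_i\sim\nu_i\ \text{for all }i\in I\Bigr\}$$
is a bisimulation. Condition~(1) is immediate: $(\sum_i p_i\mu_i)(A)=\sum_i p_i\mu_i(A)=\sum_i p_i\nu_i(A)=(\sum_i p_i\nu_i)(A)$ for each $A\subseteq AP$. For condition~(2), suppose $\TRANA{\sum_i p_i\mu_i}{a}{\mu'}$. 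By left-decomposability of $\TRANA{}{a}{}$ (Lemma~\ref{lem:tranld}), $\mu'=\sum_i p_i\cdot\mu_i'$ with $\TRANA{\mu_i}{a}{\mu_i'}$ for each $i$. Since $\mu_i\sim\nu_i$, there is $\TRANA{\nu_i}{a}{\nu_i'}$ with $\mu_i'\sim\nu_i'$; by linearity of $\TRANA{}{a}{}$ we get $\TRANA{\sum_i p_i\nu_i}{a}{\sum_i p_i\nu_i'}$, and $(\sum_i p_i\mu_i')\,R\,(\sum_i p_i\nu_i')$ by construction. Hence $R$ is a bisimulation and $\sum_i p_i\mu_i\sim\sum_i p_i\nu_i$.

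\emph{Continuity.} Given convergent sequences $\mu_j\to\mu$, $\nu_j\to\nu$ with $\mu_j\sim\nu_j$ for all $j$, I would take $R$ to be the symmetric closure of $\{(\mu,\nu)\mid \exists\ \mu_j\to\mu,\ \nu_j\to\nu\ \text{with}\ \mu_j\sim\nu_j\}$ together with $\sim$ itself, and show it is a bisimulation. Condition~(1) follows by taking limits: $\mu(A)=\lim_j\mu_j(A)=\lim_j\nu_j(A)=\nu(A)$, using that $\mu\mapsto\mu(A)$ is continuous. For condition~(2), suppose $\TRANA{\mu}{a}{\mu'}$. By left-convergence of $\TRANA{}{a}{}$, there are $\TRANA{\mu_j}{a}{\mu_j'}$ with $\mu_j'\to\mu'$. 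Matching each in $\nu_j$ gives $\TRANA{\nu_j}{a}{\nu_j'}$ with $\mu_j'\sim\nu_j'$. The sequence $\{\nu_j'\}$ lives in the compact set $\dist(S)$, so it has a convergent subsequence $\nu_{j_k}'\to\nu'$; by continuity of $\TRANA{}{a}{}$ applied to the subsequence $\nu_{j_k}\to\nu$ with targets $\nu_{j_k}'\to\nu'$, we obtain $\TRANA{\nu}{a}{\nu'}$. Finally $\mu'\,R\,\nu'$ because $\mu_{j_k}'\to\mu'$, $\nu_{j_k}'\to\nu'$, and $\mu_{j_k}'\sim\nu_{j_k}'$ — i.e. the witnessing sequences are exactly the subsequences. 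Thus $R$ is a bisimulation and $\mu\sim\nu$.

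The main obstacle I anticipate is the continuity argument: one must be careful that the relation $R$ chosen is genuinely preserved under the transition step, which forces passing to subsequences of the $\nu_j'$ while keeping the corresponding subsequences of $\mu_j'$ (which still converge to $\mu'$ since the whole sequence does) and of $\nu_j,\mu_j$ aligned. The bookkeeping of "which subsequence witnesses which pair" is the delicate point; the underlying analytic facts (compactness of $\dist(S)$, continuity of evaluation $\mu\mapsto\mu(A)$, and the continuity/left-convergence of $\TRANA{}{a}{}$ from Lemma~\ref{lem:tranld}) do all the heavy lifting. A minor additional point worth checking is that the transition relation and bisimilarity behave well under the $(\cdot)_\bot$ extension so that the reduction to the input enabled case is legitimate; this is routine since the added transitions from $\bot$ and from states missing an action are deterministic and do not interact with convex combinations or limits over the original support.
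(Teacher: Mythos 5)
Your proposal is correct and follows essentially the same route as the paper's own proof: the same reduction to the input enabled case, the same witnessing relations (convex combinations of $\sim$-pairs for linearity, limits of $\sim$-pairs for continuity), and the same use of left-decomposability, linearity, left-convergence, continuity of $\TRANA{}{a}{}$ from Lemma~\ref{lem:tranld}, and compactness of $\dist(S)$ to extract the convergent subsequence of the matching transitions. The subsequence bookkeeping you flag as the delicate point is handled in the paper exactly as you describe.
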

\begin{proof}
Note that if $\mu_i\in \dist(S)$ for any $i$, then both $\sum_i p_i\cdot \mu_i$ and $\lim_i \mu_i$ (if exists) are again in $\dist(S)$.
Thus we need only consider the case when the automata is input enabled.
\begin{itemize}
\item
 \emph{Linearity}. It suffices to show that the symmetric relation 
 \begin{eqnarray*}
 R&=&\left\{\left(\sum_{i\in I} p_i\cdot \mu_i, \sum_{i\in I} p_i\cdot \nu_i\right) \mid I \mbox{
finite}, \sum_{i\in I} p_i =1, \forall i.(p_i\geq0 \wedge \mu_i \sim\nu_i)\right\}
 \end{eqnarray*}
is a bisimulation. Let $\mu=\sum_{i\in I}p_i\cdot\mu_i$, $\nu=\sum_{i\in I}p_i\cdot\nu_i$, and $\mu R\nu$. 
 Then for any $A\subseteq AP$, 
 $$\mu(A) = \sum_{i\in I} p_i\cdot \mu_i(A) = \sum_{i\in I} p_i\cdot \nu_i(A)
  = \nu(A).$$
 Now suppose $\TRANA{\mu}{a}{\mu'}$. 
 Then by Lemma~\ref{lem:tranld} (left-decomposability),  for each $i\in I$ we have
 $\TRANA{\mu_i}{a}{\mu'_i}$ for some $\mu'_i$ such that $\mu' = \sum_{i}p_i\cdot \mu'_i$. From the assumption that
 $\mu_i \sim\nu_i$, we derive $\TRANA{\nu_i}{a}{\nu'_i}$ with $\mu'_i \sim\nu'_i$ for each $i$. Thus $\TRANA{\nu}{a}{\nu'}
 :=\sum_{i}p_i\cdot \nu'_i$ by Lemma~\ref{lem:tranld} again (linearity). Finally, it is obvious that $(\mu', \nu')\in R$. 
 
 \item \emph{Continuity}.  It suffices to show that the symmetric relation
\[ R=\{ (\mu, \nu) \mid \forall  i\geq 1, \mu_i\sim{\nu_i}, \lim_i \mu_i = \mu, \mbox{ and }\lim_i \nu_i = \nu\}\]
is a bisimulation. First, for any $A\subseteq AP$, we have
\[\mu(A) = \lim_i \mu_i(A) = \lim_i \nu_i(A) = \nu(A). \]
Let $\TRANA{\mu}{a}{\mu'}$. By Lemma~\ref{lem:tranld} (left-convergence),  for any $i$ we have
$\TRANA{\mu_i}{a}{\mu_i'}$ with $\lim_{i} \mu_i' =\mu'.$
 To match the transitions, we have $\TRANA{\nu_i}{a}{\nu'_i}$ such that $\mu_i'\sim{\nu'_i}$.
Note that $\dist(S)$ is a compact set. We can choose a convergent subsequence $\{\nu_{i_k}'\}_k$ of $\{\nu_{i}'\}_i$ such that
$\lim_k \nu_{i_k}' = \nu'$ for some $\nu'$. From and fact that $\lim_i \nu_i = \nu$ and Lemma~\ref{lem:tranld} (continuity), it holds 
$\TRANA{\nu}{a}{\nu'}$ as well. Finally, it is easy to see that $(\mu', \nu')\in R$. 
 \end{itemize}
 \qed
\end{proof}

In general, our definition of bisimilarity is not
left-decomposable. This is in sharp contrast with the 
bisimulations defined by using the lifting technique~\cite{DengGHM09}. However, this
should not be regarded as a shortcoming; actually it is the key
requirement we abandon in this paper, which makes our definition
reasonably weak. This has been clearly illustrated in 
Example~\ref{exa:lmc}.

\section{Bisimulation Metrics}\label{sec:metric}
We present distribution-based bisimulation metrics with discounting factor $\gamma\in (0,1]$ in
this section.  Three different ways of defining bisimulation
metrics between states exist in the literature: one coinductive
definition based on bisimulations~\cite{YingW00,Ying01,Ying02,DesharnaisLT08}, one based on
the maximal logical differences~\cite{DesharnaisGJP99,DesharnaisGJP04,BreugelSW07},
and one on fixed point~\cite{AlfaroMRS07,BreugelSW07,FernsPP11}. We propose all the three
versions for our distribution-based bisimulations with discounting. Moreover, we show that they coincide.  We fix a discount
factor $\gamma\in (0,1]$ throughout this section. For any $\mu, \nu\in \dist(S)$, we define the distance
$$d_{AP}(\mu, \nu) := \frac 12 \sum_{A\subseteq AP} \left|\mu(A) - \nu(A)\right|.$$
Then it is easy to check that $$d_{AP}(\mu, \nu) = \max_{\B\subseteq 2^{AP}} \left|\sum_{A\in \B} \mu(A) - \sum_{A\in \B} \nu(A)\right|= \max_{\B\subseteq 2^{AP}} \left[\sum_{A\in \B} \mu(A) - \sum_{A\in \B} \nu(A)\right].$$
\subsection{A Direct Approach}

\begin{definition}
Let $\A=(S, Act, \rightarrow,L,\alpha)$ be an input enabled probabilistic automaton.  A family of symmetric relations $\{R_\epsilon \mid \epsilon \geq 0\}$ over $\dist(S)$ is a (discounted) approximate bisimulation if
for any $\epsilon\geq 0$ and $\mu R_\epsilon\nu$, we have 
  \begin{enumerate}
\item $d_{AP}(\mu, \nu)\leq \epsilon$;
\item 
for each $a\in Act$,  $\TRANA{\mu}{a}{\mu'}$ implies that there exists a transition
  $\TRANA{\nu}{a}{\nu'}$ such that $\mu' R_{\epsilon/\gamma}\nu'$.
\end{enumerate}
We write $\mu\sim_\epsilon^\A\nu$ whenever there is an approximate
 bisimulation $\{R_\epsilon \mid \epsilon\geq 0\}$ such that $\mu R_\epsilon \nu$.  For any two
distributions $\mu$ and $\nu$, we define the bisimulation distance
of $\mu$ and $\nu$ as

\begin{equation}\label{def:metric}
D_b^\A(\mu, \nu) = \inf\{\epsilon \geq 0 \mid \mu\sim^\A_{\epsilon} \nu\}.
\end{equation}
\end{definition}

Again, the approximate bisimulation and bisimulation distance of distributions in a general probabilistic automaton can be
defined in terms of the corresponding notions in the input enabled extension; that is, $\mu\sim_\epsilon^\A\nu$ if $\mu\sim_\epsilon^{\A_\bot}\nu$, and $D_b^\A(\mu, \nu) := D_b^{\A_\bot}(\mu, \nu)$. We always omit the superscripts for simplicity if no confusion arises.

It is standard to show that the family $\{\sim_\epsilon \mid \epsilon\geq 0\}$ is itself an approximate bisimulation.
The following lemma collects some more properties of $\sim_\epsilon$.

\begin{lemma}\label{lem:abis}
\begin{enumerate}
\item For each $\epsilon$, the $\epsilon$-bisimilarity $\sim_\epsilon$ is both
linear and continuous. 
\item If $\mu\sim_{\epsilon_1}\nu$ and $\nu\sim_{\epsilon_2}\omega$, then $\mu\sim_{\epsilon_1+\epsilon_2}\omega$;
\item $\sim_{\epsilon_1}{\subseteq}\sim_{\epsilon_2}$ whenever $\epsilon_{1}\leq \epsilon_{2}$.
\end{enumerate}
\end{lemma}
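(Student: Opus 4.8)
The three items are of increasing difficulty, so I would dispatch them in reverse order of ease. Item~(3), $\sim_{\epsilon_1}\subseteq\sim_{\epsilon_2}$ for $\epsilon_1\le\epsilon_2$, follows immediately from the definition of approximate bisimulation: if $\{R_\epsilon\}$ witnesses $\mu\sim_{\epsilon_1}\nu$, then the family $\{R'_\epsilon\}$ given by $R'_\epsilon = R_\epsilon$ for all $\epsilon$ already satisfies clause~(1) with the weaker bound $d_{AP}\le\epsilon_2$ at level $\epsilon_2$, since $d_{AP}(\mu,\nu)\le\epsilon_1\le\epsilon_2$; more cleanly, one checks directly that the family $\{\sim_\epsilon\}$ is monotone because $\mu\,(\sim_{\epsilon_1})\,\nu$ places $(\mu,\nu)$ in a relation that, viewed at index $\epsilon_2$, still meets both clauses (the transition clause only gets easier as the budget grows, using $\epsilon_2/\gamma\ge\epsilon_1/\gamma$ together with item~(3) applied inductively/coinductively). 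I would phrase this as: the reindexed family $R''_\epsilon := \bigcup_{\epsilon'\le\epsilon}\sim_{\epsilon'}$ is an approximate bisimulation containing $\sim_{\epsilon_1}$ at level $\epsilon_2$.

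For item~(2), the triangle-type inequality, the plan is to exhibit an explicit approximate bisimulation witnessing $\mu\sim_{\epsilon_1+\epsilon_2}\omega$. I would set
\[
R_\epsilon \;=\; \bigl\{(\mu,\omega)\;\bigm|\; \exists\,\nu,\ \exists\,\delta_1,\delta_2\ge 0,\ \delta_1+\delta_2\le\epsilon,\ \mu\sim_{\delta_1}\nu,\ \nu\sim_{\delta_2}\omega\bigr\},
\]
and check the two clauses. Clause~(1): $d_{AP}(\mu,\omega)\le d_{AP}(\mu,\nu)+d_{AP}(\nu,\omega)\le\delta_1+\delta_2\le\epsilon$ by the triangle inequality for the pseudometric $d_{AP}$ (which is routine from the $\frac12\sum|\cdot|$ form, or from its $\max$ characterization). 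Clause~(2): given $\TRANA{\mu}{a}{\mu'}$, use $\mu\sim_{\delta_1}\nu$ to get $\TRANA{\nu}{a}{\nu'}$ with $\mu'\sim_{\delta_1/\gamma}\nu'$, then use $\nu\sim_{\delta_2}\omega$ on that very transition of $\nu$ to get $\TRANA{\omega}{a}{\omega'}$ with $\nu'\sim_{\delta_2/\gamma}\omega'$; since $(\delta_1+\delta_2)/\gamma\le\epsilon/\gamma$, the pair $(\mu',\omega')$ lies in $R_{\epsilon/\gamma}$ with witness $\nu'$. Symmetry of $R_\epsilon$ is inherited from symmetry of each $\sim_{\delta_i}$. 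Then $\mu\sim_{\epsilon_1+\epsilon_2}\omega$ since $(\mu,\omega)\in R_{\epsilon_1+\epsilon_2}$ via $\nu$ and $\delta_i=\epsilon_i$.

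Item~(1), linearity and continuity of each $\sim_\epsilon$, is the main obstacle, but it is essentially a parametrized rerun of the proof of Theorem~\ref{thm:ld}. For linearity, I would show that the symmetric family
\[
R_\epsilon=\Bigl\{\bigl(\textstyle\sum_{i\in I}p_i\mu_i,\ \sum_{i\in I}p_i\nu_i\bigr)\ \bigm|\ I\text{ finite},\ \sum_i p_i=1,\ \forall i.\,(p_i\ge 0\wedge\mu_i\sim_{\epsilon}\nu_i)\Bigr\}
\]
is an approximate bisimulation: clause~(1) uses convexity of $|\cdot|$ so that $d_{AP}(\sum p_i\mu_i,\sum p_i\nu_i)\le\sum p_i\,d_{AP}(\mu_i,\nu_i)\le\epsilon$; clause~(2) uses left-decomposability of $\TRANA{}{a}{}$ (Lemma~\ref{lem:tranld}) to split $\TRANA{\sum p_i\mu_i}{a}{\mu'}$ into $\TRANA{\mu_i}{a}{\mu'_i}$ with $\mu'=\sum p_i\mu'_i$, then $\mu_i\sim_\epsilon\nu_i$ yields $\TRANA{\nu_i}{a}{\nu'_i}$ with $\mu'_i\sim_{\epsilon/\gamma}\nu'_i$, and linearity of $\TRANA{}{a}{}$ recombines these into $\TRANA{\sum p_i\nu_i}{a}{\sum p_i\nu'_i}$ with the combined pair in $R_{\epsilon/\gamma}$. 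For continuity I would take the family
\[
R_\epsilon=\bigl\{(\mu,\nu)\ \bigm|\ \exists\{\mu_i\}_i,\{\nu_i\}_i,\ \forall i.\,\mu_i\sim_\epsilon\nu_i,\ \lim_i\mu_i=\mu,\ \lim_i\nu_i=\nu\bigr\},
\]
verify clause~(1) by continuity of $d_{AP}$ (so $d_{AP}(\mu,\nu)=\lim_i d_{AP}(\mu_i,\nu_i)\le\epsilon$), and for clause~(2) combine left-convergence of $\TRANA{}{a}{}$ (to pull $\TRANA{\mu}{a}{\mu'}$ back to $\TRANA{\mu_i}{a}{\mu'_i}$ with $\mu'_i\to\mu'$), the relation $\mu_i\sim_\epsilon\nu_i$ (to get $\TRANA{\nu_i}{a}{\nu'_i}$ with $\mu'_i\sim_{\epsilon/\gamma}\nu'_i$), compactness of $\dist(S)$ to extract a convergent subsequence $\nu'_{i_k}\to\nu'$, and continuity of $\TRANA{}{a}{}$ to conclude $\TRANA{\nu}{a}{\nu'}$ with $(\mu',\nu')\in R_{\epsilon/\gamma}$. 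The only genuinely new subtlety beyond Theorem~\ref{thm:ld} is bookkeeping the discount factor $\gamma$ through the index $\epsilon$, which is harmless since all the operations above commute with dividing $\epsilon$ by $\gamma$; as before, everything reduces to the input-enabled case because convex combinations and limits of distributions in $\dist(S)$ stay in $\dist(S)$.
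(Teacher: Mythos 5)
Your proof is correct and follows essentially the same route as the paper: item (1) is a parametrized rerun of Theorem~\ref{thm:ld} with the index $\epsilon$ carried through (which is all the paper says about it), and item (2) uses the same witness relation, namely the composition $\sim_{\delta_1}\circ\sim_{\delta_2}$ with $\delta_1+\delta_2\le\epsilon$. The only (harmless) divergence is in item (3), where the paper rescales the whole family via $R_\epsilon={\sim_{\epsilon\epsilon_1/\epsilon_2}}$ (requiring the side case $\epsilon_2>0$), whereas your downward closure $\bigcup_{\epsilon'\le\epsilon}\sim_{\epsilon'}$ achieves the same thing slightly more cleanly and without a case split.
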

\begin{proof} 
The proof of item 1 is similar to
Theorem~\ref{thm:ld}.
For item 2, it suffices to show that $\{R_\epsilon \mid \epsilon\geq 0\}$ where $R_\epsilon\ {=} \bigcup_{\epsilon_1+\epsilon_2 = \epsilon}\left(\sim_{\epsilon_1}\circ\sim_{\epsilon_2}\right)$ is an approximate bisimulation (in the extended automata, if necessary), which is routine. For item 3, suppose $\epsilon_2>0$. Then it is  easy to show $\{R_\epsilon \mid \epsilon\geq 0\}$,  $R_\epsilon\ {=} \sim_{\epsilon\epsilon_1/\epsilon_2}$, is an approximate bisimulation. 
Now if  $\mu\sim_{\epsilon_1}\nu$, that is, $\mu\sim_{\epsilon_2\epsilon_1/\epsilon_2}\nu$, then $\mu R_{\epsilon_2}\nu$, and thus $\mu\sim_{\epsilon_2}\nu$ as required. \qed
\end{proof}

The following theorem states that the infimum in the definition
Eq.~\eqref{def:metric} of bisimulation distance can be replaced by
minimum; that is, the infimum is achievable.
\begin{theorem}\label{thm:metric}
For any $\mu,\nu\in \dist(S)$,  $\mu\sim_{D_b(\mu,\nu)} \nu.$
\end{theorem}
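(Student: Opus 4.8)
The plan is to show that the set $\{\epsilon \geq 0 \mid \mu\sim_\epsilon\nu\}$ is closed, so that its infimum $D_b(\mu,\nu)$ is attained. As usual we may assume the automaton is input enabled, since the general case is defined via $\A_\bot$ and $D_b^\A(\mu,\nu)=D_b^{\A_\bot}(\mu,\nu)$. Let $\epsilon^* := D_b(\mu,\nu)$, and pick a sequence $\epsilon_n \downarrow \epsilon^*$ with $\mu\sim_{\epsilon_n}\nu$ for each $n$. The goal is to produce an approximate bisimulation $\{R_\epsilon\mid\epsilon\geq 0\}$ witnessing $\mu\sim_{\epsilon^*}\nu$.

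First I would handle the easy monotone consequences. By Lemma~\ref{lem:abis}(3), $\sim_{\epsilon_n}$ is increasing in $n$ (as $\epsilon_n$ may be taken decreasing we actually get a \emph{decreasing} chain of relations), so it is natural to consider the intersection $\bigcap_n \sim_{\epsilon_n}$; more robustly, define $R_\delta := \bigcap_{n}\sim_{\delta + (\epsilon_n-\epsilon^*)}$ for each $\delta \geq 0$, so that $R_{\epsilon^*}$ relates $\mu$ and $\nu$. The first clause is immediate: if $\mu' R_\delta \nu'$ then $d_{AP}(\mu',\nu') \leq \delta + (\epsilon_n-\epsilon^*)$ for all $n$, hence $d_{AP}(\mu',\nu')\leq\delta$ by letting $n\to\infty$. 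The transfer clause also goes through smoothly: if $\mu' R_\delta\nu'$ and $\TRANA{\mu'}{a}{\mu''}$, then for each $n$ there is $\nu''_n$ with $\TRANA{\nu'}{a}{\nu''_n}$ and $\mu''\sim_{(\delta + (\epsilon_n-\epsilon^*))/\gamma}\nu''_n$; the subtlety is that the matching successor $\nu''_n$ depends on $n$, so one cannot simply conclude $\mu'' R_{\delta/\gamma}\nu''$ for a single $\nu''$.

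This is the main obstacle, and I expect to resolve it by a compactness argument exactly as in the continuity proof of Theorem~\ref{thm:ld} and Lemma~\ref{lem:tranld}. The set of transitions $\TRANA{\nu'}{a}{\cdot}$ — more precisely the set of reachable $\nu''$ — lies in the compact set $\dist(S)$, so from $\{\nu''_n\}_n$ extract a convergent subsequence $\nu''_{n_k}\to\nu''$. By Lemma~\ref{lem:tranld} (continuity of $\TRANA{}{a}{}$, applied with the constant sequence $\mu'_{n_k}=\nu'$ \dots actually one needs that the \emph{target} limit is a genuine transition; this follows because $\nu''_n=\sum_s \nu'(s)\mu^n_s$ with $\mu^n_s$ combined $a$-transitions from $s$, and taking a joint convergent subsequence of the finitely many $\{\mu^n_s\}_n$ gives $\TRANA{\nu'}{a}{\nu''}$), we get $\TRANA{\nu'}{a}{\nu''}$. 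Finally, for each fixed $n$, $\mu''\sim_{(\delta+(\epsilon_{n_k}-\epsilon^*))/\gamma}\nu''_{n_k}$ for all large $k$, and since $\epsilon_{n_k}\downarrow\epsilon^*$ and $\nu''_{n_k}\to\nu''$, continuity of $\sim_{(\delta+(\epsilon_n-\epsilon^*))/\gamma}$ (Lemma~\ref{lem:abis}(1), with the constant sequence $\mu''$ on the left) together with the monotonicity in the index $\epsilon$ (Lemma~\ref{lem:abis}(3)) yields $\mu''\sim_{(\delta+(\epsilon_n-\epsilon^*))/\gamma}\nu''$ for every $n$, i.e. $\mu'' R_{\delta/\gamma}\nu''$. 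Hence $\{R_\delta\mid\delta\geq 0\}$ is an approximate bisimulation containing $(\mu,\nu)$ at level $\epsilon^*$, so $\mu\sim_{\epsilon^*}\nu=\mu\sim_{D_b(\mu,\nu)}\nu$, as required. The one bookkeeping point to be careful about is that a single subsequence $\{n_k\}$ must serve simultaneously for the limit transition and for all the countably many level-constraints; a diagonal argument, or simply noting that once $\nu''_{n_k}\to\nu''$ the level statement for each fixed $n$ uses only the tail $k$ large, handles this.
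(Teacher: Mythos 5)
Your proposal is correct and follows essentially the same route as the paper's own proof: both attain the infimum by intersecting the relations $\sim_{\epsilon_n}$ along a decreasing sequence $\epsilon_n\downarrow D_b(\mu,\nu)$, extract a convergent subsequence of the matching successors using compactness of $\dist(S)$, and then combine continuity of $\TRANA{}{a}{}$ (Lemma~\ref{lem:tranld}) with continuity and monotonicity of $\sim_\epsilon$ (Lemma~\ref{lem:abis}) to pass to the limit. The one point to tidy is that you establish $\mu''\sim_{(\delta+(\epsilon_n-\epsilon^*))/\gamma}\nu''$, whose levels are $\delta/\gamma+(\epsilon_n-\epsilon^*)/\gamma$ rather than the levels $\delta/\gamma+(\epsilon_n-\epsilon^*)$ defining your $R_{\delta/\gamma}$; since both level sequences decrease to $\delta/\gamma$, one further application of Lemma~\ref{lem:abis}(3) (after discarding the trivial case where some $\epsilon_m=\epsilon^*$) closes this, and the paper sidesteps the issue entirely by letting its $R_\epsilon$ range over \emph{all} decreasing sequences converging to $\epsilon$.
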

\begin{proof} By definition, we need to prove $\mu\sim_{D_b(\mu,\nu)} \nu$ in the extended automaton. We first prove that for any $\epsilon\geq0$, the symmetric relations $\{R_\epsilon \mid \epsilon \geq 0\}$ where 
\begin{eqnarray*}
R_\epsilon=\{(\mu, \nu)& \mid &\mu \sim_{\epsilon_{i}} \nu \mbox{ for each } \epsilon_{1}\geq\epsilon_{2}\geq\cdots \geq 0, \mbox{ and } \lim_{i\rightarrow \infty} \epsilon_{i}=\epsilon\}
\end{eqnarray*}
is an approximate bisimulation. Suppose $\mu R_\epsilon\nu$. Since $\mu\sim_{\epsilon_{i}} \nu$ we have $d_{AP}(\mu, \nu)\leq \epsilon_i$ for each $i$. Thus $d_{AP}(\mu, \nu)\leq \epsilon$ as well. 
Furthermore, if $\TRANA{\mu}{a}{\mu'}$, then for any $i\geq 1$,  $\TRANA{\nu}{a}{\nu_{i}}$ and
$\mu'\sim_{\epsilon_{i}/\gamma}\nu_{i}$. Since $\dist(S)$ is compact, there exists a subsequence $\{\nu_{i_k}\}_k$ of $\{\nu_i\}_i$ such that $\lim_k \nu_{i_k} = \nu'$ for some $\nu'$. We claim that
\begin{itemize}
\item $\TRANA{\nu}{a}{\nu'}$. This follows from the continuity of the transition $\TRANA{}{a}{}$, Lemma~\ref{lem:tranld}.
\item For each $k\geq 1$,  $\mu'\sim_{\epsilon_{i_k}/\gamma}\nu'$. Suppose conversely that $\mu'\not\sim_{\epsilon_{i_k}/\gamma}\nu'$ for some $k$. Then by the continuity of $\sim_{\epsilon_{i_k}/\gamma}$, we have $\mu'\not\sim_{\epsilon_{i_k}/\gamma}\nu_j$ for some $j\geq i_k$. This contradicts the fact that  $\mu'\sim_{\epsilon_j/\gamma}\nu_{j}$ and Lemma~\ref{lem:abis}(3). Thus $\mu' R_{\epsilon/\gamma}\nu'$ as required.
\end{itemize}

Finally, it is direct from definition that there exists a decreasing sequence $\{\epsilon_i\}_i$ such that $\lim_i \epsilon_i = D_b(\mu, \nu)$ and $\mu \sim_{\epsilon_{i}} \nu$ for each $i$. Then the theorem follows.
\qed
\end{proof}

A direct consequence of the above theorem is that the bisimulation distance between two distributions vanishes if and only if they are  bisimilar.
\begin{corollary} For any $\mu, \nu\in \dist(S)$, $\mu\sim \nu$ if and only if $D_b(\mu,\nu)=0$. \end{corollary}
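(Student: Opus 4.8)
The plan is to derive the corollary directly from Theorem~\ref{thm:metric}, which guarantees that the infimum in Eq.~\eqref{def:metric} is attained. For the ``only if'' direction, suppose $\mu\sim\nu$. Since $\sim\ \subseteq\ \sim_0$ (the $0$-bisimilarity, as witnessed by taking the constant family $R_\epsilon = {\sim}$ for all $\epsilon\ge 0$, which is trivially an approximate bisimulation because $d_{AP}(\mu,\nu)=0$ whenever $\mu\sim\nu$ by clause (1) of the bisimulation definition, and the transition condition is inherited), we get $\mu\sim_0\nu$, so $0\in\{\epsilon\ge 0\mid \mu\sim_\epsilon\nu\}$ and hence $D_b(\mu,\nu)=0$.

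For the ``if'' direction, suppose $D_b(\mu,\nu)=0$. By Theorem~\ref{thm:metric}, $\mu\sim_{D_b(\mu,\nu)}\nu$, i.e., $\mu\sim_0\nu$. It then remains to observe that $\sim_0$ is itself a bisimulation in the sense of Definition of $\sim$: if $\{R_\epsilon\mid\epsilon\ge 0\}$ is any approximate bisimulation and $\mu R_0\nu$, then $d_{AP}(\mu,\nu)\le 0$ forces $\mu(A)=\nu(A)$ for all $A\subseteq AP$ (clause 1 of the bisimulation definition), and the transition clause gives, for each $\TRANA{\mu}{a}{\mu'}$, some $\TRANA{\nu}{a}{\nu'}$ with $\mu' R_{0/\gamma}\nu' = R_0\nu'$ (here $0/\gamma = 0$ since $\gamma\in(0,1]$), which is exactly clause 2. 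Hence $R_0$ is a distribution-based bisimulation, so $\sim_0\ \subseteq\ \sim$, and therefore $\mu\sim\nu$.

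I expect no real obstacle here; the only point requiring a little care is the bookkeeping that $0/\gamma=0$ so that the relation $R_0$ is ``closed under transitions into itself'', making it a genuine bisimulation rather than merely an approximate one — and correspondingly that $d_{AP}(\mu,\nu)\le 0$ is equivalent to $\mu(A)=\nu(A)$ for all $A$. Both are immediate from the definitions, so the corollary is essentially a one-line consequence of Theorem~\ref{thm:metric} together with the observation that $\sim$ and $\sim_0$ coincide. (All work should be carried out in the input enabled extension $\A_\bot$, as in the statement; this is harmless since the definitions of both $\sim$ and $D_b$ reduce to that setting.)
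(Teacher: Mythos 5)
Your proof is correct and follows essentially the same route as the paper: the corollary is obtained directly from Theorem~\ref{thm:metric} together with the identity $\sim\,=\,\sim_0$, which you simply spell out in both directions (the constant family $R_\epsilon={\sim}$ witnesses $\sim\,\subseteq\,\sim_0$, and $0/\gamma=0$ plus $d_{AP}\le 0$ shows $\sim_0$ is itself a bisimulation). The paper states this identity without proof, so your elaboration is just a more detailed version of the same argument.
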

\begin{proof} Direct from Theorem~\ref{thm:metric}, by noting that $\sim{=}\sim_{0}$. \qed\end{proof}

The next theorem shows that $D_b$ is indeed a pseudometric.
\begin{theorem}
The bisimulation distance $D_b$ is a pseudometric on $\dist(S)$.
\end{theorem}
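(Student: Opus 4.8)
The plan is to verify the three pseudometric axioms for $D_b$ in turn, using the results already established about $\sim_\epsilon$. For the reflexivity axiom $D_b(\mu,\mu)=0$, I would observe that the identity relation (more precisely, the family $R_\epsilon=\{(\mu,\mu)\mid \mu\in\dist(S)\}$ for all $\epsilon\geq 0$) is trivially an approximate bisimulation: $d_{AP}(\mu,\mu)=0\leq\epsilon$, and any transition $\TRANA{\mu}{a}{\mu'}$ is matched by itself with $\mu'\mathrel{R_{\epsilon/\gamma}}\mu'$. Hence $\mu\sim_0\mu$, so $D_b(\mu,\mu)=0$. Symmetry $D_b(\mu,\nu)=D_b(\nu,\mu)$ is immediate from the fact that every approximate bisimulation is by definition a family of \emph{symmetric} relations, so $\mu\sim_\epsilon\nu$ iff $\nu\sim_\epsilon\mu$, and the two infima in Eq.~\eqref{def:metric} are over the same set.

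The triangle inequality is the only part requiring real content, and it follows directly from Lemma~\ref{lem:abis}(2). Given $\mu,\nu,\omega\in\dist(S)$, write $\epsilon_1=D_b(\mu,\nu)$ and $\epsilon_2=D_b(\nu,\omega)$. By Theorem~\ref{thm:metric} the infima are achieved, so $\mu\sim_{\epsilon_1}\nu$ and $\nu\sim_{\epsilon_2}\omega$. Then Lemma~\ref{lem:abis}(2) gives $\mu\sim_{\epsilon_1+\epsilon_2}\omega$, whence $D_b(\mu,\omega)\leq\epsilon_1+\epsilon_2=D_b(\mu,\nu)+D_b(\nu,\omega)$. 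Finally, one should note that $D_b$ takes values in $[0,1]$: since $d_{AP}(\mu,\nu)\leq 1$ always, the family $R_\epsilon=\dist(S)\times\dist(S)$ for $\epsilon\geq 1$ (and $R_\epsilon=\emptyset$ for $\epsilon<1$, or rather the approximate bisimilarity itself) witnesses $\mu\sim_1\nu$ for all $\mu,\nu$, so the infimum is over a nonempty subset of $[0,1]$ and hence lies in $[0,1]$. For the non-input-enabled case, everything reduces to $\A_\bot$ by definition of $D_b^\A$, so no separate argument is needed.

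I do not expect any genuine obstacle here: the theorem is essentially a bookkeeping corollary of Theorem~\ref{thm:metric} (achievability of the infimum) together with Lemma~\ref{lem:abis}(2) (additive composition of approximate bisimilarities). The only point meriting a word of care is that the triangle inequality genuinely needs Theorem~\ref{thm:metric} — without knowing the infimum is attained, one would have to argue with $\sim_{\epsilon_1+\delta}$ and $\sim_{\epsilon_2+\delta}$ for arbitrary $\delta>0$ and then let $\delta\to 0$, which is a minor extra step but not essential given what has been proved. Continuity of $D_b$ as a pseudometric (required by the paper's standing assumption on pseudometrics in Section~\ref{sec:pre}) would follow from Lemma~\ref{lem:abis}(1) and Theorem~\ref{thm:metric}, and I would mention it briefly, though it is arguably already subsumed by the earlier development.
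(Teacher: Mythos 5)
Your proof is correct and follows essentially the same route as the paper: the paper also dispatches the triangle inequality by invoking Theorem~\ref{thm:metric} to get $\mu\sim_{D_b(\mu,\nu)}\nu$ and $\nu\sim_{D_b(\nu,\omega)}\omega$ and then applying Lemma~\ref{lem:abis}(2). Your additional remarks on reflexivity, symmetry, well-definedness of the infimum, and the $\delta\to 0$ fallback are correct but are details the paper leaves implicit.
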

\begin{proof} We need only to prove that $D_b$ satisfies the triangle inequality
$$D_b(\mu,\nu) + D_b(\nu,\omega) \geq D_b(\mu,\omega).$$
By Theorem~\ref{thm:metric}, we have $\mu\sim_{D_b(\mu,\nu)}\nu$ and $\nu\sim_{D_b(\nu,\omega)}\omega$. Then the result follows from Lemma~\ref{lem:abis}(2). \qed
\end{proof}

\subsection{Modal Characterization of the Bisimulation Metrics}
We now present a Hennessy-Milner type modal logic motivated by~\cite{DesharnaisGJP99,DesharnaisGJP04} to characterize the distance between distributions.

\begin{definition} The class $\l_m$ of modal formulae over $AP$, ranged over by $\phi$, $\phi_1$, $\phi_2$, etc, is defined by the following grammar:
\begin{eqnarray*}
\phi &::=& \B \ |\ \phi\oplus p\ |\ \neg \phi\ |\  \bigwedge_{i\in I} \phi_i\ |\ \<a\>\phi
\end{eqnarray*}
where $\B\subseteq 2^{AP}$, $p\in [0,1]$, $a\in Act$, and $I$ is an index set.
\end{definition}

Given an input enabled probabilistic automaton $\A=(S, Act, \rightarrow,L,\alpha)$ over $AP$,
instead of defining the satisfaction relation $\models$ for the
qualitative setting, the
(discounted) semantics of the logic $\l_m$ is given in terms of
functions from $\dist(S)$ to $[0,1]$.  For any formula $\phi\in \l_m$,
the satisfaction function of $\phi$, denoted by $\phi$ again for
simplicity, is defined in a structural induction way as follows:
\begin{itemize}
\item $\B(\mu) := \sum_{A\in \B}\mu(A)$;
\item $(\phi\oplus p)(\mu) := \min\{\phi(\mu)+p, 1\}$;
\item $(\neg \phi)(\mu) := 1-\phi(\mu)$;
\item $ (\bigwedge_{i\in I} \phi_i)(\mu) := \inf_{i\in I} \phi_{i}(\mu)$;
\item $(\<a\>\phi)(\mu) := \sup_{\TRANA{\mu}{a}{\mu'}} \gamma \cdot\phi(\mu')$.
\end{itemize}


\begin{lemma}\label{lem:continuous}
For any $\phi\in \l_m$, $\phi : \dist(S) \rightarrow [0,1]$ is a continuous function.
\end{lemma}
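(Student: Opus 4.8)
The plan is to prove Lemma~\ref{lem:continuous} by structural induction on $\phi$, showing at each step that the semantic function $\phi : \dist(S) \to [0,1]$ is continuous with respect to the standard topology on $\dist(S) \subseteq \mathbb{R}^{|S|}$. The base case $\phi = \B$ is immediate: $\B(\mu) = \sum_{A\in\B}\mu(A) = \sum_{A\in\B}\sum_{L(s)=A}\mu(s)$ is a finite linear combination of the coordinate functions $\mu \mapsto \mu(s)$, hence continuous (indeed affine). For the inductive step I would treat the propositional connectives first, since they are easy: if $\phi$ is continuous then so is $(\phi\oplus p)(\mu) = \min\{\phi(\mu)+p,1\}$ and $(\neg\phi)(\mu) = 1 - \phi(\mu)$, as $\min$, addition by a constant, and $x\mapsto 1-x$ are all continuous. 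For the infinitary conjunction $\bigwedge_{i\in I}\phi_i$, each $\phi_i$ is continuous by the induction hypothesis, and the pointwise infimum of a family of continuous $[0,1]$-valued functions is upper semicontinuous; but since all the functions take values in the compact interval $[0,1]$ and — crucially — the formulas of $\l_m$ are built so that we only need upper semicontinuity here... actually I would need to be careful: a pointwise infimum of continuous functions need not be continuous. So the real content, and the main obstacle, is the modal case.

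For the modal case $\<a\>\phi$, with $(\<a\>\phi)(\mu) = \sup_{\TRANA{\mu}{a}{\mu'}}\gamma\cdot\phi(\mu')$, the plan is as follows. Fix a convergent sequence $\mu_n \to \mu$ in $\dist(S)$; I want to show $(\<a\>\phi)(\mu_n) \to (\<a\>\phi)(\mu)$. The key structural facts I would invoke are Lemma~\ref{lem:tranld}: the transition relation $\TRANA{}{a}{}$ is continuous and left-convergent (and also that it is ``total'' in an input-enabled automaton, so the sup is over a nonempty set). Lower semicontinuity of $\<a\>\phi$ at $\mu$: given any $\TRANA{\mu}{a}{\mu'}$, by left-convergence there exist $\TRANA{\mu_n}{a}{\mu'_n}$ with $\mu'_n \to \mu'$; then $(\<a\>\phi)(\mu_n) \ge \gamma\phi(\mu'_n) \to \gamma\phi(\mu')$ by the induction hypothesis, so $\liminf_n (\<a\>\phi)(\mu_n) \ge \gamma\phi(\mu')$, and taking the sup over all $\mu'$ gives $\liminf_n (\<a\>\phi)(\mu_n) \ge (\<a\>\phi)(\mu)$. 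Upper semicontinuity: for each $n$ pick (or approximately pick, up to $1/n$) a transition $\TRANA{\mu_n}{a}{\mu'_n}$ realizing the sup for $\mu_n$; since $\dist(S)$ is compact, pass to a subsequence with $\mu'_{n_k}\to\nu'$; by the continuity clause of Lemma~\ref{lem:tranld} we get $\TRANA{\mu}{a}{\nu'}$, and by the induction hypothesis $\gamma\phi(\mu'_{n_k}) \to \gamma\phi(\nu') \le (\<a\>\phi)(\mu)$, so $\limsup_k (\<a\>\phi)(\mu_{n_k}) \le (\<a\>\phi)(\mu)$; a standard subsequence argument (every subsequence has a further subsequence converging to the same limit) upgrades this to $\limsup_n (\<a\>\phi)(\mu_n) \le (\<a\>\phi)(\mu)$.

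I expect two subtle points to need care. First, whether the supremum in $(\<a\>\phi)(\mu)$ is attained — it is, because the set of $\mu'$ with $\TRANA{\mu}{a}{\mu'}$ is the continuous image of a compact set (a product of the finitely many convex-combination simplices over the combined transitions available at the support states) and $\phi$ is continuous, so we may freely replace ``sup'' by ``max''; this makes the upper-semicontinuity argument cleaner. Second, the infinitary conjunction: the honest resolution is that $\bigwedge_{i\in I}\phi_i$ is in general only upper semicontinuous, so if the lemma as stated really claims continuity for all of $\l_m$, then either the conjunction must be restricted (e.g.\ to finite $I$, or one works up to semicontinuity), or the intended claim is weaker than full continuity — I would flag this and, in the worst case, prove the sharper statement that every $\phi\in\l_m$ is upper semicontinuous while the modal and propositional fragments without infinitary $\bigwedge$ are continuous, which suffices for the metric characterization that follows. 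The genuinely load-bearing step, and the one I'd spend the most effort on, is the modal case, and there the whole argument hinges on having both the continuity and the left-convergence properties of $\TRANA{}{a}{}$ from Lemma~\ref{lem:tranld} together with compactness of $\dist(S)$.
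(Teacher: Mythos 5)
Your handling of the modal case is essentially the paper's own argument: lower semicontinuity of $\<a\>\phi'$ via the left-convergence of $\TRANA{}{a}{}$ (matching a near-optimal successor of $\mu$ by successors of the $\mu_n$), upper semicontinuity via compactness of $\dist(S)$ together with the continuity clause of Lemma~\ref{lem:tranld} (near-optimal successors of the $\mu_n$ accumulate at a successor of $\mu$), and the same treatment of the base case and the propositional connectives, so the proposal is correct and follows the same route. Your reservation about $\bigwedge_{i\in I}\phi_i$ is well taken rather than a defect of your argument: the paper dismisses that case as ``easy from induction,'' but a pointwise infimum of continuous $[0,1]$-valued functions over an infinite index set is in general only upper semicontinuous, so your proposed resolution (restrict to finite $I$, or record only upper semicontinuity for that clause) addresses a gap the paper itself leaves unargued.
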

\begin{proof} We prove by induction on the structure of $\phi$. The basis case when $\phi \equiv \B$ is obvious. The case of $\phi \equiv \phi'\oplus p$, $\phi \equiv \neg \phi'$, and $\phi \equiv \bigwedge_{i\in I} \phi_i$ are all easy from induction.
In the following we only consider the case when $\phi \equiv \<a\>\phi'$. 

Take arbitrarily $\{\mu_i\}_i$ with $\lim_i \mu_i = \mu$. We need to show there exists a subsequence $\{\mu_{i_k}\}_k$ of $\{\mu_i\}_i$ such that $\lim_k\phi(\mu_{i_k}) = \phi(\mu)$. Take arbitrarily $\epsilon >0$. 
\begin{itemize}
\item Let $\mu^*\in \dist(S)$ such that $\TRANA{\mu}{a}{\mu^*}$ and $\phi(\mu)\leq \gamma \cdot\phi'(\mu^*)+\epsilon/2$. We have from the left-convergence of $\TRANA{}{a}{}$ that $\TRANA{\mu_i}{a}{\nu_i}$ for some $\nu_i$, and $\lim_i \nu_i = \mu^*$. By induction, $\phi'$ is a continuous function. Thus we can find $N_1\geq 1$ such that for any $i\geq N_1$, $|\phi'(\mu^*)-\phi'(\nu_i)|<\epsilon/2\gamma$.

\item For each $i\geq 1$, let $\mu_i^*\in \dist(S)$ such that $\TRANA{\mu_i}{a}{\mu_i^*}$ and $\phi(\mu_i)\leq \gamma \cdot\phi'(\mu_i^*)+\epsilon/2$. Then we have $\TRANA{\mu}{a}{\nu^*}$ with $\nu^*=\lim_k \mu_{i_k}^*$ for some convergent subsequence $\{\mu_{i_k}^*\}_k$ of $\{\mu_i^*\}_i$. Again, from the induction that $\phi'$ is continuous, we can find $N_2\geq 1$ such that for any $k\geq N_2$, $|\phi'(\mu_{i_k}^*)-\phi'(\nu^*)|<\epsilon/2\gamma $.
\end{itemize}
Let $N=\max\{N_1, N_2\}$. Then for any $k\geq N$, we have from  $\TRANA{\mu}{a}{\nu^*}$ that
\begin{eqnarray*}
\phi(\mu_{i_k}) - \phi(\mu) &\leq & \gamma [\phi'(\mu_{i_k}^*) -  \phi'(\nu^*)] + \gamma \cdot\phi'(\nu^*) - \phi(\mu) + \epsilon/2\\
&\leq& \gamma [\phi'(\mu_{i_k}^*) -  \phi'(\nu^*)]+ \epsilon/2 < \epsilon.
\end{eqnarray*}
Similarly, from $\TRANA{\mu_{i_k}}{a}{\nu_{i_k}}$ we have
\begin{eqnarray*}
\phi(\mu) - \phi(\mu_{i_k})  &\leq& \gamma [\phi'(\mu^*) - \phi'(\nu_{i_k})] + \gamma \cdot\phi'(\nu_{i_k}) - \phi(\mu_{i_k}) + \epsilon/2 \\
&\leq&\gamma [\phi'(\mu^*) - \phi'(\nu_{i_k})]+ \epsilon/2 < \epsilon.
\end{eqnarray*}
Thus $\lim_k\phi(\mu_{i_k}) = \phi(\mu)$ as required.
\qed
\end{proof}

From Lemma~\ref{lem:continuous}, and noting that the set $\{\mu' \mid
\TRANA{\mu}{a}{\mu'}\}$ is compact for each $\mu$ and $a$, the
supremum in the semantic definition of $\<a\>\phi$ can be replaced by
maximum; that is, $(\<a\>\phi)(\mu) =
\max_{\TRANA{\mu}{a}{\mu'}}\gamma \cdot \phi(\mu')$.
Now we define the logical distance for
distributions.
\begin{definition} The \emph{logic distance} of $\mu$ and $\nu$ in $\dist(S)$ of an input enabled automaton is defined by 
  \begin{align}
    \label{eq:2}
D_l^\A(\mu, \nu) = \sup_{\phi \in \l_m} |\phi(\mu) - \phi(\nu)|\ .    
  \end{align}
The logic distance for a general probabilistic automaton can be
defined in terms of the input enabled extension; that is, $D_l^\A(\mu, \nu) := D_l^{\A_\bot}(\mu, \nu)$. 
We always omit the superscripts for simplicity.
\end{definition}

Now we can show that the logic distance exactly coincides with bisimulation distance for any distributions.

\begin{theorem}\label{thm:dbdl}
$D_b = D_l$. 
\end{theorem}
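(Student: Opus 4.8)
The plan is to prove the two inequalities $D_l \leq D_b$ and $D_b \leq D_l$ separately, working throughout in the input enabled extension (so WLOG the automaton is input enabled). For the direction $D_l(\mu,\nu) \leq D_b(\mu,\nu)$, I would show by structural induction on $\phi \in \l_m$ that $|\phi(\mu) - \phi(\nu)| \leq D_b(\mu,\nu)$ for all $\mu,\nu$. By Theorem~\ref{thm:metric} we may use $\mu \sim_{D_b(\mu,\nu)} \nu$. The base case $\phi \equiv \B$ is exactly clause~1 of approximate bisimulation together with the identity $d_{AP}(\mu,\nu) = \max_{\B\subseteq 2^{AP}}[\sum_{A\in\B}\mu(A) - \sum_{A\in\B}\nu(A)]$ established just before the modal logic subsection. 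The cases $\phi \oplus p$, $\neg\phi$, $\bigwedge_{i\in I}\phi_i$ are routine: negation preserves the absolute difference, $\oplus p$ and $\inf$ are non-expansive. The interesting case is $\phi \equiv \<a\>\phi'$: given $\TRANA{\mu}{a}{\mu'}$ achieving the maximum, clause~2 of approximate bisimulation yields $\TRANA{\nu}{a}{\nu'}$ with $\mu' \sim_{D_b(\mu,\nu)/\gamma} \nu'$, hence by induction $|\phi'(\mu') - \phi'(\nu')| \leq D_b(\mu,\nu)/\gamma$; multiplying by $\gamma$ and using symmetry of the roles of $\mu,\nu$ gives $|(\<a\>\phi')(\mu) - (\<a\>\phi')(\nu)| \leq D_b(\mu,\nu)$. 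Taking the supremum over $\phi$ gives $D_l \leq D_b$.

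For the harder direction $D_b(\mu,\nu) \leq D_l(\mu,\nu)$, the plan is to show that the family $\{R_\epsilon \mid \epsilon \geq 0\}$ with $R_\epsilon = \{(\mu,\nu) \mid D_l(\mu,\nu) \leq \epsilon\}$ is an approximate bisimulation; then $\mu \sim_{D_l(\mu,\nu)} \nu$ and the definition of $D_b$ as an infimum gives the claim. Clause~1 ($d_{AP}(\mu,\nu) \leq \epsilon$) is immediate since each $\B \in 2^{AP}$, viewed as a formula, satisfies $|\B(\mu) - \B(\nu)| \leq D_l(\mu,\nu)$, and $d_{AP}$ is the max of these. For clause~2, suppose $D_l(\mu,\nu) \leq \epsilon$ and $\TRANA{\mu}{a}{\mu'}$; I must produce $\TRANA{\nu}{a}{\nu'}$ with $D_l(\mu',\nu') \leq \epsilon/\gamma$. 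I would argue by contradiction: if no such $\nu'$ exists, then for every $\nu'$ with $\TRANA{\nu}{a}{\nu'}$ there is a formula $\psi_{\nu'}$ with $|\psi_{\nu'}(\mu') - \psi_{\nu'}(\nu')| > \epsilon/\gamma$; using continuity of each $\psi_{\nu'}$ (Lemma~\ref{lem:continuous}) and compactness of the set $\{\nu' \mid \TRANA{\nu}{a}{\nu'}\}$, I extract a finite subcover and take a conjunction (possibly after negating to make the differences one-signed) to obtain a single formula $\psi$ with $\psi(\mu')$ exceeding $\psi(\nu')$ by more than $\epsilon/\gamma$ uniformly over all $a$-successors $\nu'$ of $\nu$. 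Then $\<a\>\psi$ (more precisely a suitable one-signed variant, e.g.\ working with $\neg\psi$ if needed) separates $\mu$ and $\nu$ by more than $\epsilon$: $(\<a\>\psi)(\mu) \geq \gamma\psi(\mu') $ while $(\<a\>\psi)(\nu) = \max_{\TRANA{\nu}{a}{\nu''}} \gamma\psi(\nu'') < \gamma\psi(\mu') - \epsilon$, contradicting $D_l(\mu,\nu) \leq \epsilon$.

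The main obstacle I expect is the contradiction step in clause~2: turning a family of separating formulae — one per $a$-successor $\nu'$ of $\nu$, each separating in a possibly different direction — into a single formula $\psi$ that separates $\mu'$ from \emph{every} $a$-successor of $\nu$ simultaneously and in a \emph{fixed} direction (so that applying $\<a\>$ preserves the gap rather than collapsing it). The sign bookkeeping is delicate because $\<a\>$ is a supremum and hence only monotone in one direction; one typically handles this by splitting the successor set into those where $\psi_{\nu'}(\mu') > \psi_{\nu'}(\nu') + \epsilon/\gamma$ and those where the reverse strict inequality holds, negating formulae in the second group, and then taking $\bigwedge$ over a finite subcover obtained from compactness and the continuity of Lemma~\ref{lem:continuous}; the infinite index set $I$ allowed in the grammar even lets one dispense with the finite subcover and conjoin directly over all $\nu'$. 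Care must also be taken that $D_l(\mu',\nu') \leq \epsilon/\gamma$ is the right target because of the $1/\gamma$ blow-up, but this matches the transition clause of approximate bisimulation exactly, so no additional discounting argument is needed.
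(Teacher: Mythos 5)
Your overall route is the same as the paper's: the direction $D_l\leq D_b$ by structural induction on $\phi$ (using Theorem~\ref{thm:metric} in the base case and in the $\<a\>\phi'$ case) matches the paper essentially verbatim, and for $D_b\leq D_l$ the paper likewise shows that $R_\epsilon=\{(\mu,\nu)\mid D_l(\mu,\nu)\leq\epsilon\}$ is an approximate bisimulation. The gap is exactly at the point you flag as the main obstacle, and your proposed fix does not close it. After negating so that every separating formula satisfies $\phi_{\omega}(\mu')>\phi_{\omega}(\omega)+\epsilon/\gamma$, simply conjoining them (over a finite subcover or over all $\omega$) fails: the conjunction is interpreted as an infimum, so $(\bigwedge_{\omega}\phi_{\omega})(\mu')=\inf_{\omega}\phi_{\omega}(\mu')$ may be far below the particular value $\phi_{\omega_0}(\mu')$ against which the gap at a given successor $\omega_0$ was measured. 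Concretely, for $\omega_0$ you only get $(\bigwedge_{\omega}\phi_{\omega})(\omega_0)<\phi_{\omega_0}(\mu')-\epsilon/\gamma$, and since $\phi_{\omega_0}(\mu')\geq\inf_{\omega}\phi_{\omega}(\mu')$ this bound sits on the wrong side of the quantity you need, namely $(\bigwedge_{\omega}\phi_{\omega})(\mu')-\epsilon/\gamma$. The one-signedness you worry about is not the real problem; the problem is that the infimum erodes the value at $\mu'$.

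The missing ingredient is the $\oplus$-normalization, which is precisely why the grammar of $\l_m$ contains the operator $\phi\oplus p$. The paper sets $p=\sup_{\omega\in\k}\phi_{\omega}(\mu')$ (where $\k$ is the set of ``bad'' successors $\omega$ with $\TRANA{\nu}{a}{\omega}$ and $D_l(\mu',\omega)>\epsilon/\gamma$) and replaces each $\phi_{\omega}$ by $\phi'_{\omega}=\phi_{\omega}\oplus[p-\phi_{\omega}(\mu')]$, so that $\phi'_{\omega}(\mu')=p$ for every $\omega$ while still $\phi'_{\omega}(\omega)<p-\epsilon/\gamma$. Only then does $\phi'=\bigwedge_{\omega\in\k}\phi'_{\omega}$ satisfy $\phi'(\mu')=p$ and $\phi'(\omega)<p-\epsilon/\gamma$ uniformly, and $\phi=\<a\>\phi'$ forces the maximizing successor of $\nu$ out of $\k$, yielding the contradiction. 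With this normalization no compactness or finite-subcover argument is needed (the infinite conjunction suffices), although continuity and compactness are still used elsewhere to know that the suprema in the semantics of $\<a\>$ are attained. Your sketch should be amended to include this shift; as written, the single formula $\psi$ you construct need not separate $\mu$ from $\nu$ by more than $\epsilon$.
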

\begin{proof}
As both $D_b$ and $D_l$ are defined in terms of the input enabled extension of automata, we only need to prove the result for
input enabled case.
Let $\mu, \nu\in \dist(S)$. We first prove $D_b(\mu, \nu) \geq D_l(\mu, \nu)$. It suffices to show by structural induction that for any $\phi\in \l_m$, 
$|\phi(\mu) - \phi(\nu)|\leq D_b(\mu, \nu).$
There are five cases to consider.
\begin{itemize}
\item $\phi \equiv \B$ for some $\B\subseteq 2^{AP}$. Then $|\phi(\mu) - \phi(\nu)| = |\sum_{A\in \B} [\mu(A) -\nu(A)]|
 \leq d_{AP}(\mu, \nu) \leq D_b(\mu, \nu)$ by Theorem~\ref{thm:metric}.
\item $\phi \equiv  \phi'\oplus p$. Assume $\phi'(\mu)\geq \phi'(\nu)$. Then $\phi(\mu)\geq \phi(\nu)$. By induction, we have $\phi'(\mu) - \phi'(\nu)\leq  D_b(\mu, \nu)$.
Thus
$$|\phi(\mu) - \phi(\nu)| = \min\{\phi'(\mu)+p, 1\} -  \min\{\phi'(\nu)+p, 1\}\leq  \phi'(\mu) - \phi'(\nu) \leq  D_b(\mu, \nu).$$

\item $\phi \equiv \neg \phi'$. By induction, we have $|\phi'(\mu) - \phi'(\nu)|\leq  D_b(\mu, \nu)$, thus
$|\phi(\mu) - \phi(\nu)|  = |1- \phi'(\mu) - 1 + \phi'(\nu)| \leq  D_b(\mu, \nu)$ as well.

\item $\phi \equiv   \bigwedge_{i\in I} \phi_i$. Assume $\phi(\mu)\geq  \phi(\nu)$. For any $\epsilon>0$, let $j\in I$ such that $\phi_j(\nu) \leq \phi(\nu)+\epsilon$. By induction, we have $|\phi_j(\mu) - \phi_j(\nu)|\leq  D_b(\mu, \nu)$. Then
$$|\phi(\mu) - \phi(\nu)|  \leq  \phi_{j}(\mu) - \phi_{j}(\nu) +\epsilon \leq  D_b(\mu, \nu) +\epsilon,$$
and $|\phi(\mu) - \phi(\nu)|  \leq D_b(\mu, \nu)$ from the arbitrariness of $\epsilon$.

\item $\phi \equiv \<a\>\phi'$. Assume $\phi(\mu)\geq \phi(\nu)$. 
Let $\mu'_*\in \dist(S)$ such that $\TRANA{\mu}{a}{\mu'_*}$ and $\gamma \cdot\phi'(\mu'_*) = \phi(\mu)$. From Theorem~\ref{thm:metric}, we have 
$\mu\sim_{D_b(\mu, \nu)}\nu$. Thus there exists $\nu'_*$ such that $\TRANA{\nu}{a}{\nu'_*}$ and $\mu'_*\sim_{D_b(\mu, \nu)/\gamma}\nu'_*$. Hence $\gamma \cdot D_b(\mu'_*, \nu'_*) \leq D_b(\mu, \nu)$, and
$$|\phi(\mu) - \phi(\nu)|  \leq  \gamma \cdot[\phi'(\mu'_*) - \phi'(\nu'_*)]\leq  \gamma \cdot D_b(\mu'_*, \nu'_*) \leq  D_b(\mu, \nu)$$
where the second inequality is from induction.
\end{itemize}

Now we turn to the proof of $D_b(\mu, \nu) \leq D_l(\mu, \nu)$. We
will achieve this by showing that the symmetric relations
$R_\epsilon=\{(\mu,\nu) \mid D_l(\mu, \nu) \leq \epsilon \},$ where $\epsilon\geq 0$, constitute an approximate bisimulation. Let $\mu R_\epsilon \nu$ for some $\epsilon \geq 0$. First, for any $\B\subseteq 2^{AP}$ we have 
$$\left|\sum_{A\in \B} \mu(A) - \sum_{A\in \B} \nu(A)\right| = |\B(\mu) - \B(\nu)| \leq D_l(\mu, \nu)\leq \epsilon.$$ 
Thus $d_{AP}(\mu, \nu) \leq \epsilon$ as well.
Now suppose $\TRANA{\mu}{a}{\mu'}$ for some $\mu'$. 
We have to show that there is some $\nu'$ with  $\TRANA{\nu}{a}{\nu'}$ and $D_l(\mu', \nu')\leq \epsilon/\gamma$. Consider the set
$$\k = \{\omega\in \dist(S) \mid \TRANA{\nu}{a}{\omega} \mbox{ and } D_l(\mu', \omega)> \epsilon/\gamma\}.$$
For each $\omega\in \k$, there must be some $\phi_{\omega}$ such that $|\phi_{\omega}(\mu') - \phi_{\omega}(\omega)|> \epsilon/\gamma$. 
As our logic includes the operator $\neg$, we can always assume that $\phi_{\omega}(\mu') > \phi_{\omega}(\omega)+\epsilon/\gamma$.
Let $p= \sup_{\omega\in \k} \phi_{\omega}(\mu')$. Let
$$\phi_{\omega}'=\phi_{\omega} \oplus [p - \phi_{\omega}(\mu')], \ \ \ \phi' = \bigwedge_{\omega\in \k} \phi'_{\omega},\ \ \  \mbox{  and  }\ \ \  \phi = \<a\>\phi'.$$
Then from the assumption that $D_l(\mu, \nu) \leq \epsilon$, we have $|\phi(\mu) - \phi(\nu)| \leq \epsilon$. Furthermore, we check that for any $\omega\in \k$,
$$\phi'_\omega(\mu')=\phi_{\omega}(\mu') \oplus [p - \phi_{\omega}(\mu')]=p.$$
Thus $\phi(\mu) \geq \gamma\cdot\phi'(\mu')=\gamma\cdot p$.

Let 
$\nu'$ be the distribution such that $\TRANA{\nu}{a}{\nu'}$ and $\phi(\nu) =\gamma\cdot \phi'(\nu')$. We are going to show that $\nu'\not\in \k$, and then $D_l(\mu', \nu')\leq \epsilon/\gamma$ as required. For this purpose, assume conversely that $\nu'\in \k$. 
Then
\begin{eqnarray*}
\phi(\nu) &=& \gamma\cdot\phi'(\nu')\leq \gamma\cdot\phi'_{\nu'}(\nu')\leq \gamma\cdot [\phi_{\nu'}(\nu')  +  p - \phi_{\nu'}(\mu')]\\
&<&\gamma\cdot p-\epsilon\leq \phi(\mu)-\epsilon,
\end{eqnarray*}
contradicting the fact that $|\phi(\mu) - \phi(\nu)| \leq \epsilon$.

We have proven that $\{R_\epsilon \mid \epsilon\geq 0\}$ is an approximate bisimulation. Thus $\mu\sim_\epsilon \nu$, and so $D_b(\mu, \nu) \leq \epsilon$, whenever $D_l(\mu, \nu) \leq \epsilon$. So we have $D_b(\mu, \nu) \leq D_l(\mu, \nu)$ from the arbitrariness of $\epsilon$. 
\qed
\end{proof}

\subsection{A Fixed Point-Based Approach}
In the following, we denote by $\mathcal{M}$ the set of pseudometrics
over $\dist(S)$.  
Denote by $\z$ the zero pseudometric which assigns 0 to each pair of distributions. 
For any $d, d'\in\mathcal{M}$, we write $d\leq d'$
if $d(\mu, \nu)\leq d'(\mu, \nu)$ for any $\mu$ and $\nu$. Obviously
$\leq$ is a partial order, and $(\mathcal{M}, \leq)$ is a complete
lattice.

\begin{definition}\label{def:metricfunc} Let $\A=(S, Act, \rightarrow,L,\alpha)$ be an input enabled probabilistic automaton.  
We define the function $F:\mathcal{M}\to
  \mathcal{M}$ as follows. For any $\mu, \nu\in \dist(S)$, 
  \begin{eqnarray*}
 F(d)(\mu,\nu) = \max_{a\in Act} &\{& d_{AP}(\mu, \nu),\\
&&  \sup_{\TRANA{\mu}{a}{\mu'}} \inf_{\TRANA{\nu}{a}{\nu'}} \gamma\cdot
  d(\mu',\nu'), \sup_{\TRANA{\nu}{a}{\nu'}}
  \inf_{\TRANA{\mu}{a}{\mu'}} \gamma\cdot d(\mu',\nu')\}.
\end{eqnarray*}
Then, $F$ is monotonic
with respect to $\leq$, and by Knaster-Tarski theorem, $F$ has a
least fixed point, denoted $D_f^\A$, given by 
\begin{align*}
D^\A_f = \bigvee_{n=0}^\infty
F^n(\z) \ .  
\end{align*}
\end{definition}

Once again, the fixed point-based distance for a general probabilistic automaton can be
defined in terms of the input enabled extension; that is, $D_f^\A(\mu, \nu) := D_f^{\A_\bot}(\mu, \nu)$. We always omit the superscripts for simplicity.

Similar to
Lemma~\ref{lem:continuous}, we can show that the supremum
(resp. infimum) in Definiton~\ref{def:metricfunc} can be replaced by
maximum (resp. minimum).  
Now  we show
that $D_f$ coincides with $D_b$.
\begin{theorem}\label{thm:dfdb}
$D_f= D_b$.  
\end{theorem}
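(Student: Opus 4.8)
The plan is to prove $D_f = D_b$ by establishing the two inequalities $D_f \le D_b$ and $D_b \le D_f$ separately, working throughout (as both distances are defined via the input enabled extension) in an input enabled automaton. As a preliminary, I would record that, by an argument analogous to Lemma~\ref{lem:continuous} together with compactness of $\{\mu' \mid \TRANA{\mu}{a}{\mu'}\}$, the suprema and infima in Definition~\ref{def:metricfunc} are attained, so $F(d)(\mu,\nu)$ is genuinely a maximum over $a$ of a max of three quantities involving $\max$ and $\min$ over transitions; I would also note that each $F^n(\z)$ is continuous and hence so is $D_f$ (an increasing pointwise limit of continuous pseudometrics on a compact space that is itself a pseudometric — or more carefully, one shows continuity is preserved by $F$ and passes to the sup using a Dini-type / monotonicity argument, mirroring the treatment of $\sim_\epsilon$).

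For $D_f \le D_b$: I would show that the family $\{R_\epsilon \mid \epsilon \ge 0\}$ defined by $R_\epsilon = \{(\mu,\nu) \mid D_f(\mu,\nu) \le \epsilon\}$ is an approximate bisimulation. Suppose $D_f(\mu,\nu) \le \epsilon$. Since $D_f$ is a fixed point of $F$, we have $D_f(\mu,\nu) = F(D_f)(\mu,\nu)$, so in particular $d_{AP}(\mu,\nu) \le D_f(\mu,\nu) \le \epsilon$, giving clause~(1). For clause~(2), if $\TRANA{\mu}{a}{\mu'}$, then again from $D_f = F(D_f)$ the term $\sup_{\TRANA{\mu}{a}{\mu'}} \inf_{\TRANA{\nu}{a}{\nu'}} \gamma \cdot D_f(\mu',\nu') \le D_f(\mu,\nu) \le \epsilon$, and since the infimum is attained there is $\nu'$ with $\TRANA{\nu}{a}{\nu'}$ and $\gamma \cdot D_f(\mu',\nu') \le \epsilon$, i.e. $D_f(\mu',\nu') \le \epsilon/\gamma$, so $\mu' R_{\epsilon/\gamma} \nu'$. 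Hence $\{R_\epsilon\}$ is an approximate bisimulation, so $D_f(\mu,\nu) \le \epsilon$ implies $\mu \sim_\epsilon \nu$ and thus $D_b(\mu,\nu) \le \epsilon$; taking $\epsilon = D_f(\mu,\nu)$ (or an infimum) yields $D_b \le D_f$ — wait, that gives the reverse. Let me restate: this argument shows $D_b(\mu,\nu) \le D_f(\mu,\nu)$.

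For the other direction $D_f \le D_b$: since $D_f = \bigvee_n F^n(\z)$ is the least fixed point, it suffices to show $D_b$ is a pre-fixed point, i.e. $F(D_b) \le D_b$, equivalently $F(D_b)(\mu,\nu) \le D_b(\mu,\nu)$ for all $\mu,\nu$. By Theorem~\ref{thm:metric} we have $\mu \sim_{D_b(\mu,\nu)} \nu$. First, $d_{AP}(\mu,\nu) \le D_b(\mu,\nu)$ follows exactly as in the proof of Theorem~\ref{thm:dbdl} (the $\phi \equiv \B$ case). Second, fix $a \in Act$ and any $\TRANA{\mu}{a}{\mu'}$; by the bisimulation property of $\sim_{D_b(\mu,\nu)}$ there is $\TRANA{\nu}{a}{\nu'}$ with $\mu' \sim_{D_b(\mu,\nu)/\gamma} \nu'$, whence $D_b(\mu',\nu') \le D_b(\mu,\nu)/\gamma$ and so $\inf_{\TRANA{\nu}{a}{\nu'}} \gamma \cdot D_b(\mu',\nu') \le D_b(\mu,\nu)$; taking the supremum over $\TRANA{\mu}{a}{\mu'}$ preserves this bound, and the symmetric term is handled identically using symmetry of $\sim_{D_b(\mu,\nu)}$ and of $D_b$. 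Therefore $F(D_b)(\mu,\nu) \le D_b(\mu,\nu)$, so $D_f \le D_b$, and combined with the previous paragraph, $D_f = D_b$.

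The main obstacle I anticipate is the preliminary step: justifying that the infima and suprema in $F$ are attained and that $D_f$ is continuous, so that the clean fixed-point manipulations above are legitimate. Attainment of the infimum over $\{\nu' \mid \TRANA{\nu}{a}{\nu'}\}$ needs that set to be compact (clear, as in the remark after Lemma~\ref{lem:continuous}) and $D_f(\mu',\cdot)$ to be lower semicontinuous — which follows once one knows $D_f$ is continuous; and continuity of $D_f$ requires showing $F$ maps continuous pseudometrics to continuous pseudometrics and that the increasing limit stays continuous, an argument in the spirit of Lemma~\ref{lem:continuous} and Theorem~\ref{thm:metric} using compactness of $\dist(S)$ and left-convergence/continuity of $\TRANA{}{a}{}$ from Lemma~\ref{lem:tranld}. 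Once these topological points are in place, both inequalities are short.
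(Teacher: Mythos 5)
Your direction $D_f \le D_b$ is essentially the paper's Lemma~\ref{lem:left}: establishing $F(D_b)\le D_b$ from Theorem~\ref{thm:metric} and concluding $F^n(\z)\le D_b$ by induction is the same argument, merely packaged as a pre-fixed-point statement. For $D_b \le D_f$ you take a genuinely different route. You show directly that $R_\epsilon = \{(\mu,\nu) \mid D_f(\mu,\nu)\le\epsilon\}$ is an approximate bisimulation, which needs two facts: that the infimum over $\{\nu' \mid \TRANA{\nu}{a}{\nu'}\}$ is attained, and, more importantly, that $F(D_f)\le D_f$. The latter is not free: from $D_f=\bigvee_n F^n(\z)$ and monotonicity one only gets $D_f\le F(D_f)$; the reverse inequality amounts to exchanging $\inf_{\nu'}$ with $\sup_n$, which requires compactness of the transition set together with (lower semi)continuity of each $F^n(\z)$ --- exactly the Dini-type argument you flag as the main obstacle. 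The paper's Lemma~\ref{lem:right} sidesteps this entirely: it introduces the stratified relations $\ysim{\epsilon}_n$, proves $\mu \ysim{F^n(\z)(\mu,\nu)}_n \nu$ by induction using only the relation between $F^{n+1}(\z)$ and $F^n(\z)$ (never that $D_f$ is a fixed point), and then identifies $\bigcap_n \ysim{\epsilon}_n$ with $\sim_\epsilon$ via a compactness/continuity argument (Lemma~\ref{lem:simlimit}). So comparable topological work is unavoidable on either route, but the paper localizes it in the auxiliary relations, whereas your route front-loads it into showing that $\bigvee_n F^n(\z)$ really is a fixed point of $F$ --- a fact the paper asserts via Knaster--Tarski but never actually proves or uses. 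Provided you carry out that continuity argument (continuity of each $F^n(\z)$ by induction using Lemma~\ref{lem:tranld}, then the finite-intersection argument on the compact set of successors), your proof is complete and is arguably the more standard fixed-point presentation.
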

As both $D_f$ and $D_b$ are defined in terms of the input enabled extension of automata, we only need to prove Theorem~\ref{thm:dfdb} for
input enabled case, which will be obtained by combining Lemma \ref{lem:left} and Lemma \ref{lem:right} below.
\begin{lemma}\label{lem:left}
For input enabled probabilistic automata, $D_f\leq D_b$.
\end{lemma}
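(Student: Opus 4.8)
The plan is to show that $D_f \leq D_b$ by proving that the least fixed point $D_f$ of the functional $F$ is bounded above by $D_b$; since $D_f = \bigvee_n F^n(\z)$, it suffices to show $F^n(\z) \leq D_b$ for every $n$, which I will do by induction on $n$. The base case $F^0(\z) = \z \leq D_b$ is immediate. For the inductive step, I would assume $F^n(\z) \leq D_b$ and show $F^{n+1}(\z) = F(F^n(\z)) \leq D_b$; by monotonicity of $F$ it is enough to show $F(d) \leq D_b$ whenever $d \leq D_b$. So the crux reduces to the single claim: if $d \leq D_b$ then for all $\mu, \nu$, $F(d)(\mu,\nu) \leq D_b(\mu,\nu)$.

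To establish this claim, fix $\mu,\nu$ and write $\epsilon = D_b(\mu,\nu)$. By Theorem~\ref{thm:metric} we have $\mu \sim_\epsilon \nu$, i.e.\ the pair $(\mu,\nu)$ lies in an approximate bisimulation with index $\epsilon$. Unfolding the definition of $F(d)(\mu,\nu)$, there are three quantities to bound by $\epsilon$. The term $d_{AP}(\mu,\nu)$ is $\leq \epsilon$ directly from clause (1) of approximate bisimulation. For the second term, $\sup_{\TRANA{\mu}{a}{\mu'}} \inf_{\TRANA{\nu}{a}{\nu'}} \gamma \cdot d(\mu',\nu')$: given any $\TRANA{\mu}{a}{\mu'}$, clause (2) of $\mu\sim_\epsilon\nu$ yields $\TRANA{\nu}{a}{\nu'}$ with $\mu'\sim_{\epsilon/\gamma}\nu'$, hence $D_b(\mu',\nu')\leq\epsilon/\gamma$ by definition of $D_b$, and therefore $\gamma\cdot d(\mu',\nu') \leq \gamma\cdot D_b(\mu',\nu') \leq \epsilon$ using the inductive hypothesis $d\leq D_b$. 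Taking infimum over the matching $\nu'$ and then supremum over $\mu'$ keeps the bound $\leq\epsilon$. The third term is symmetric, using that $\sim_\epsilon$ is a symmetric relation. Hence $F(d)(\mu,\nu)\leq\epsilon = D_b(\mu,\nu)$, completing the induction, so $D_f = \bigvee_n F^n(\z) \leq D_b$.

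I expect the only subtlety — not really an obstacle — to be the careful handling of the $\sup$/$\inf$ alternation in $F$: one must be sure that for \emph{every} transition $\TRANA{\mu}{a}{\mu'}$ there is \emph{some} matching $\TRANA{\nu}{a}{\nu'}$ witnessing the bound, which is exactly what clause (2) of approximate bisimulation provides, and that the infimum over all matching $\nu'$ is then no larger. The replacement of $\sup$/$\inf$ by $\max$/$\min$ mentioned after Definition~\ref{def:metricfunc} (justified by compactness of $\dist(S)$ and of the transition-image sets, together with continuity as in Lemma~\ref{lem:continuous}) is convenient but not strictly needed for this direction, since we only ever need the inequality in one direction. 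Everything else is routine bookkeeping with the definitions.
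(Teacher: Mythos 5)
Your proof is correct and follows essentially the same route as the paper's: induction on $n$ showing $F^n(\z)\leq D_b$, using Theorem~\ref{thm:metric} to get $\mu\sim_{D_b(\mu,\nu)}\nu$ and then matching each transition $\TRANA{\mu}{a}{\mu'}$ with some $\TRANA{\nu}{a}{\nu'}$ satisfying $\gamma\cdot D_b(\mu',\nu')\leq D_b(\mu,\nu)$. The only cosmetic difference is that you phrase the inductive step via the monotonicity of $F$ applied to any $d\leq D_b$, while the paper works directly with $F^n(\z)$; the substance is identical.
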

\begin{proof}
It suffices to prove by induction that for any $n\geq 0$, $F^n(\z)\leq D_b$. The case of $n=0$ is trivial. Suppose
$F^n(\z)\leq D_b$ for some $n\geq 0$. Then for any $a\in Act$ and any $\mu, \nu$, we have
\begin{enumerate}
\item[(1)] $d_{AP}(\mu, \nu) \leq D_b(\mu, \nu)$ by the fact that $\mu\sim_{D_b(\mu, \nu)} \nu$;
\item[(2)] Note that $\mu\sim_{D_b(\mu, \nu)} \nu$. Whenever $\TRANA{\mu}{a}{\mu'}$, we have $\TRANA{\nu}{a}{\nu'}$ for some $\nu'$ such that $\mu'\sim_{D_b(\mu, \nu)/\gamma} \nu'$, and hence $\gamma\cdot D_b(\mu', \nu')\leq D_b(\mu, \nu)$. That, together with the assumption $F^n(\z)\leq D_b$, implies
$$ \max_{\TRANA{\mu}{a}{\mu'}} \min_{\TRANA{\nu}{a}{\nu'}}\gamma\cdot  F^n(\z)(\mu',\nu') \leq D_b(\mu,\nu).$$
The symmetric form can be similarly proved.
\end{enumerate}
Summing up (1) and (2), we have $F^{n+1}(\z)\leq D_b$. \qed
\end{proof}

The opposite direction is summarized in the following lemma. The proof is technically involved so we put it into the appendix.

\begin{lemma}\label{lem:right}
For input enabled probabilistic automata, $D_b\leq D_f$.
\end{lemma}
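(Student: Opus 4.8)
The plan is to show $D_b \leq D_f$ by proving that the family of relations
\[
R_\epsilon = \{(\mu,\nu) \mid D_f(\mu,\nu) \leq \epsilon\}, \qquad \epsilon \geq 0,
\]
is a (discounted) approximate bisimulation; once this is done, $D_f(\mu,\nu) \leq \epsilon$ implies $\mu \sim_\epsilon \nu$, hence $D_b(\mu,\nu) \leq \epsilon$ for every such $\epsilon$, and taking $\epsilon = D_f(\mu,\nu)$ (using that $D_f$ is the least fixed point, so $F(D_f) = D_f$) gives the claim. Condition (1) of approximate bisimulation is immediate: since $F(D_f) = D_f$, we have $d_{AP}(\mu,\nu) \leq F(D_f)(\mu,\nu) = D_f(\mu,\nu) \leq \epsilon$ whenever $\mu R_\epsilon \nu$.

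The substance is condition (2): given $\mu R_\epsilon \nu$ and a transition $\TRANA{\mu}{a}{\mu'}$, I must produce $\TRANA{\nu}{a}{\nu'}$ with $D_f(\mu',\nu') \leq \epsilon/\gamma$, i.e. $(\mu',\nu') \in R_{\epsilon/\gamma}$. From $F(D_f) = D_f$ and the fact (noted after Definition~\ref{def:metricfunc}) that the suprema and infima in $F$ are attained, we get for this particular $\mu'$ that
\[
\min_{\TRANA{\nu}{a}{\nu''}} \gamma \cdot D_f(\mu',\nu'') \leq F(D_f)(\mu,\nu) = D_f(\mu,\nu) \leq \epsilon,
\]
so there is a witnessing $\nu'$ with $\TRANA{\nu}{a}{\nu'}$ and $\gamma \cdot D_f(\mu',\nu') \leq \epsilon$, which is exactly what we want. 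So $\{R_\epsilon\}$ is an approximate bisimulation, $\mu \sim_{D_f(\mu,\nu)} \nu$, and $D_b(\mu,\nu) \leq D_f(\mu,\nu)$.

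I expect the main obstacle to be rigorously justifying that the infimum in the definition of $F$ is a minimum, i.e. that for fixed $\mu'$ the map $\nu'' \mapsto D_f(\mu',\nu'')$ attains its minimum over the compact set $\{\nu'' \mid \TRANA{\nu}{a}{\nu''}\}$. This requires knowing that $D_f$ is continuous (so that the least fixed point of $F$ lands in the space of \emph{continuous} pseudometrics, consistent with the paper's standing assumption), and that the image set $\{\nu'' \mid \TRANA{\nu}{a}{\nu''}\}$ is indeed compact — the latter follows from $\dist(S)$ being compact together with the continuity/closedness properties of $\TRANA{}{a}{}$ established in Lemma~\ref{lem:tranld}. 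The continuity of $D_f$ itself should follow by an argument paralleling Lemma~\ref{lem:continuous}: one shows by induction that each $F^n(\z)$ is continuous, using the left-convergence and continuity of $\TRANA{}{a}{}$ to handle the $\sup$-$\inf$ term, and then that the pointwise supremum $D_f = \bigvee_n F^n(\z)$ of this monotone increasing sequence is continuous (one must be slightly careful here, since a sup of continuous functions need not be continuous in general — but a monotone increasing sequence converging to a limit, combined with the uniform $\gamma$-discounting which forces $F^{n+1}(\z)$ and $F^n(\z)$ to differ by at most $\gamma^n \cdot \mathrm{diam}$, gives uniform convergence and hence continuity of the limit). Organizing these topological preliminaries cleanly is where the real work lies; the bisimulation-matching argument itself is short once they are in place.
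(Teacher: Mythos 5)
Your overall strategy is sound and genuinely different from the paper's. You work directly with the fixed point $D_f$ and show that the sublevel sets $R_\epsilon=\{(\mu,\nu)\mid D_f(\mu,\nu)\le\epsilon\}$ form an approximate bisimulation, reading the matching transition off the fixed-point equation $F(D_f)=D_f$. The paper never touches $D_f$ this way: it introduces a stratified family of relations $\ysim{\epsilon}_n$ (with $\ysim{\epsilon}=\bigcap_n\ysim{\epsilon}_n$), proves by induction that $\mu\ysim{F^n(\z)(\mu,\nu)}_n\nu$, shows $\ysim{\epsilon}=\sim_\epsilon$ by a compactness/subsequence argument carried out at the level of the relations, and only then concludes $\mu\sim_{D_f(\mu,\nu)}\nu$. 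The payoff of the paper's route is that it only ever needs the infimum in $F$ to be attained for the finite approximants $F^n(\z)$, whose continuity is provable by induction (in the style of Lemma~\ref{lem:continuous}); the price is the extra machinery of Lemmas~\ref{lem:simlimit} and~\ref{lem:tmp}. Your route is shorter, at the cost of needing the infimum $\inf_{\TRANA{\nu}{a}{\nu''}}D_f(\mu',\nu'')$ to be attained for the limit metric $D_f$ itself.

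On that last point your proposed justification has a soft spot. The uniform-convergence argument ($\|F^{n+1}(\z)-F^n(\z)\|_\infty\le\gamma^n$) does give continuity of $D_f$, but only for $\gamma<1$; the paper allows $\gamma=1$, where $F$ is not a contraction and the pointwise supremum of the $F^n(\z)$ is a priori only lower semicontinuous. Fortunately, lower semicontinuity is all you need: $D_f=\sup_n F^n(\z)$ is automatically lower semicontinuous as a supremum of continuous functions, and a lower semicontinuous function attains its infimum on the compact set $\{\nu''\mid\TRANA{\nu}{a}{\nu''}\}$ (compact because it is a closed subset of the compact set $\dist(S)$, by the continuity of $\TRANA{}{a}{}$ from Lemma~\ref{lem:tranld}). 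With that substitution your argument closes for all $\gamma\in(0,1]$. You are also implicitly using $F(D_f)=D_f$ for $D_f=\bigvee_n F^n(\z)$, which needs $\omega$-continuity of $F$ rather than Knaster--Tarski alone; since the paper asserts this identification in Definition~\ref{def:metricfunc}, it is fair to take it as given.
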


\subsection{Comparison with State-Based Metrics}\label{app:comparison}
In this section, we prove that our distribution-based bisimulation
metric is lower bounded by the state-based game bisimulation
metrics~\cite{AlfaroMRS07} for MDPs. This game bisimulation metric is
particularly attractive as it preserves probabilistic reachability,
long-run, and discounted average behaviors \cite{ChatterjeeAMR10}. We first recall the
definition of state-based game bisimulation metrics \cite{AlfaroMRS07}
for MDPs:

\begin{definition}
Given $\mu, \nu\in \dist(S)$, $\mu\otimes \nu$ is defined as the set of \emph{weight} functions $\lambda: S\times S\rightarrow [0,1]$ such that
for any $s,t\in S$, 
$$\sum_{s\in S} \lambda(s,t) = \nu(t)\ \ \ \mbox{ and  }\ \ \ \sum_{t\in S} \lambda(s,t) = \mu(s).$$
Given a metric $d$ defined on $S$, we lift it to $\dist(S)$ by defining
$$d(\mu, \nu) = \inf_{\lambda\in \mu\otimes \nu} \left(\sum_{s,t\in S} \lambda(s,t)\cdot d(s, t)\right).$$
\end{definition}

Actually the infimum in the above definition is attainable. 

\begin{definition} 
We define the function $f:\mathcal{M}\to
  \mathcal{M}$ as follows. For any $s, t \in S$, 
  \begin{eqnarray*}
 f(d)(s,t) = \max_{a\in Act}\left\{1-\delta_{L(s), L(t)}, \sup_{\TRANPA{s}{a}{\mu}} \inf_{\TRANPA{t}{a}{\nu}} 
  \gamma\cdot d(\mu,\nu), \sup_{\TRANPA{t}{a}{\nu}}
  \inf_{\TRANPA{s}{a}{\mu}}  \gamma\cdot d(\mu,\nu)\right\}
\end{eqnarray*}
where $\delta_{L(s), L(t)}=1$ if $L(s)=L(t)$, and 0 otherwise. We take $\inf \emptyset = 1$ and $\sup\emptyset = 0$.
Again, $f$ is monotonic
with respect to $\leq$, and by Knaster-Tarski theorem, $F$ has a
least fixed point, denoted $d_f$, given by 
\begin{align*}
d_f = \bigvee_{n=0}^\infty
f^n(\z) \ .  
\end{align*}

\end{definition}

Now we can prove the quantitative extension of Lemma~\ref{lem:bsp}. Without loss of any generality, we assume that $\A$ itself is input enabled. Let $d_n = f^n(\z)$ and $D_n = F^n(\z)$ in Definition~\ref{def:metricfunc}.


\begin{lemma}\label{lem:tmp11}
For any $n\geq 1$, $d_{AP}(\mu, \nu )\leq d_n(\mu, \nu)$.
\end{lemma}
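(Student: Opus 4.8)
The plan is to prove the statement by induction on $n$, relating the distribution-level distance $d_n = f^n(\z)$ lifted via optimal weight functions to the label-distance $d_{AP}$. The key observation is that $d_{AP}(\mu,\nu) = \frac12\sum_{A\subseteq AP}|\mu(A) - \nu(A)|$, and in the setting of Lemma~\ref{lem:bsp} the labeling is $L(s) = Act(s)$, so states are partitioned into finitely many label classes $S(A)$. What I want to exploit is that $f(d)(s,t) \geq 1 - \delta_{L(s),L(t)}$ for \emph{any} pseudometric $d$, hence $d_n(s,t) \geq 1$ whenever $L(s)\neq L(t)$, for every $n\geq 1$.

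First I would fix $n\geq 1$ and an optimal weight function $\lambda\in\mu\otimes\nu$ achieving $d_n(\mu,\nu) = \sum_{s,t}\lambda(s,t)\cdot d_n(s,t)$ (the infimum is attainable, as noted after the definition of the lifting). Then I would split the double sum according to whether $L(s) = L(t)$ or not:
\[
d_n(\mu,\nu) = \sum_{L(s)=L(t)}\lambda(s,t)\,d_n(s,t) + \sum_{L(s)\neq L(t)}\lambda(s,t)\,d_n(s,t) \geq \sum_{L(s)\neq L(t)}\lambda(s,t),
\]
using $d_n(s,t)\geq 1$ on the second block and $d_n\geq 0$ on the first. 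So it suffices to show the total weight $\lambda$ puts on label-mismatched pairs is at least $d_{AP}(\mu,\nu)$. This is a standard transportation/coupling fact: for any coupling $\lambda$ of $\mu$ and $\nu$, the $\lambda$-mass off the ``diagonal blocks'' $\{(s,t) : L(s)=L(t)\}$ is at least the total-variation distance between the pushforward label-distributions $A\mapsto\mu(A)$ and $A\mapsto\nu(A)$, which is exactly $d_{AP}(\mu,\nu)$. Concretely, for each label $A$, the $\lambda$-mass of the block $S(A)\times S(A)$ is at most $\min\{\mu(A),\nu(A)\}$, so the mismatched mass is at least $1 - \sum_A\min\{\mu(A),\nu(A)\} = \sum_A(\mu(A) - \min\{\mu(A),\nu(A)\})^+ = \frac12\sum_A|\mu(A)-\nu(A)| = d_{AP}(\mu,\nu)$.

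The main obstacle — though it is more of a bookkeeping point than a genuine difficulty — is making the coupling argument airtight: one must check that the marginal constraints on $\lambda$ indeed force $\sum_{s\in S(A),\,t\in S(A)}\lambda(s,t)\leq\min\{\mu(A),\nu(A)\}$, which follows because this sub-sum is bounded by both $\sum_{t\in S(A)}\sum_{s}\lambda(s,t) = \nu(A)$ and $\sum_{s\in S(A)}\sum_{t}\lambda(s,t) = \mu(A)$. No induction on $n$ is actually needed beyond the single fact $d_n(s,t)\geq 1-\delta_{L(s),L(t)}$ for $n\geq 1$, which is immediate from one application of $f$ and monotonicity of $f$ together with $\z\leq d_{n-1}$ (equivalently, from the explicit form of $f$). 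I would present this cleanly, noting that the bound holds uniformly in $n$ and therefore also passes to the limit $d_f$.
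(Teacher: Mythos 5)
Your proof is correct and follows essentially the same route as the paper's: fix an optimal weight function $\lambda$, use $d_n(s,t)\geq 1-\delta_{L(s),L(t)}$ to lower-bound $d_n(\mu,\nu)$ by the $\lambda$-mass on label-mismatched pairs, and then show that this mass dominates $d_{AP}(\mu,\nu)$. The only (cosmetic) difference is the final total-variation identity: the paper bounds $d_{AP}(\mu,\nu)=\sum_{A\in\B}[\mu(A)-\nu(A)]$ for a maximizing $\B\subseteq 2^{AP}$ directly via the marginal constraints, whereas you use the equivalent identity $d_{AP}(\mu,\nu)=1-\sum_A\min\{\mu(A),\nu(A)\}$ together with $\lambda(S(A)\times S(A))\leq\min\{\mu(A),\nu(A)\}$.
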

\begin{proof}
Let $\lambda$ be the weight function such that
$d_n(\mu, \nu) = \sum_{s,t\in S} \lambda(s,t)\cdot d_n(s, t)$. 
Since $d_n(s, t)\geq 1-\delta_{L(s), L(t)}$, we have
$$d_n(\mu, \nu)\geq 1- \sum_{s,t:L(s)= L(t)} \lambda(s,t).$$
On the other hand, for any $A\subseteq AP$, recall that $S(A)=\{s\in S \mid L(s) = A\}$. Then
\begin{eqnarray*}
\mu(A) - \nu(A) &=& \sum_{s\in S(A)} \mu(s) - \sum_{t\in S(A)} \nu(t) \\
&=&  \sum_{s\in S(A)}\sum_{t\not\in S(A)}\lambda(s,t) - \sum_{t\in S(A)}\sum_{s\not\in S(A)}\lambda(s,t).
\end{eqnarray*}
Let $\B\subseteq 2^{AP}$ such that
$d_{AP}(\mu, \nu) = \sum_{A\in \B} [\mu(A) - \nu(A)]$. Then
$$d_{AP}(\mu, \nu) \leq \sum_{A\in \B} \sum_{s\in S(A)}\sum_{t\not\in S(A)}\lambda(s,t) \leq \sum_{s,t:L(s)\neq L(t)} \lambda(s,t),$$
and the result follows. \qed
\end{proof}

\begin{theorem}\label{thm:less}
 Let $\A$ be a probabilistic automaton. Then $D_f \leq d_f$.
\end{theorem}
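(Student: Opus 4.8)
The plan is to prove $D_f \le d_f$ by showing, by induction on $n$, that $D_n(\mu,\nu) = F^n(\mathbf{0})(\mu,\nu) \le d_n(\mu,\nu)$ for all distributions $\mu,\nu$, where on the right $d_n = f^n(\mathbf{0})$ is lifted from states to distributions via the optimal weight (coupling) function. Taking the supremum over $n$ on both sides then yields $D_f \le d_f$. The base case $n=0$ is trivial since both sides are the zero pseudometric. For the inductive step, assume $D_n \le d_n$; we must bound each of the three terms in the definition of $F^{n+1}(\mathbf{0})(\mu,\nu) = F(D_n)(\mu,\nu)$ by $d_{n+1}(\mu,\nu)$.

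\textbf{First step: the labelling term.} We must show $d_{AP}(\mu,\nu) \le d_{n+1}(\mu,\nu)$; but since $d_{n+1} = f^{n+1}(\mathbf{0}) \ge f(\mathbf{0})$ dominates the lift of $1 - \delta_{L(s),L(t)}$, this is exactly Lemma~\ref{lem:tmp11} (applied with $n+1$ in place of $n$).

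\textbf{Second step: the transition terms.} Fix $a \in Act$ and a transition $\TRANA{\mu}{a}{\mu'}$; by Definition~\ref{def:dptran} there are combined transitions $\TRANPA{s}{a}{\mu_s}$ with $\mu' = \sum_s \mu(s)\cdot\mu_s$. Let $\lambda \in \mu\otimes\nu$ be the optimal weight function realizing $d_{n+1}(\mu,\nu) = \sum_{s,t}\lambda(s,t)\cdot d_{n+1}(s,t)$. For each pair $(s,t)$ with $\lambda(s,t) > 0$, since $d_{n+1}(s,t) = f(d_n)(s,t)$ bounds the $a$-term, there is a combined transition $\TRANPA{t}{a}{\nu_t^s}$ with $\gamma\cdot d_n(\mu_s, \nu_t^s) \le d_{n+1}(s,t)$ (using $\sup\emptyset = 0$, $\inf\emptyset = 1$ to handle the degenerate cases, which are ruled out here since we are in an input-enabled automaton so every action is enabled). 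Now define, mimicking the construction in the proof of Lemma~\ref{lem:bsp}, $\nu_t = \sum_{s}\frac{\lambda(s,t)}{\nu(t)}\cdot\nu_t^s$, which is a valid combined transition $\TRANPA{t}{a}{\nu_t}$, and set $\nu' = \sum_t \nu(t)\cdot\nu_t$, so that $\TRANA{\nu}{a}{\nu'}$. It remains to check $\gamma\cdot D_n(\mu',\nu') \le d_{n+1}(\mu,\nu)$. By the induction hypothesis $D_n \le d_n$, so it suffices to bound $\gamma\cdot d_n(\mu',\nu')$. The key observation is that the family $\{\lambda(s,t)\cdot \lambda_{s,t}\}_{s,t}$, where $\lambda_{s,t} \in \mu_s \otimes \nu_t^s$ is the optimal weight function for $d_n(\mu_s,\nu_t^s)$, assembles into a valid weight function in $\mu'\otimes\nu'$; hence
\begin{align*}
\gamma\cdot d_n(\mu',\nu') \;\le\; \gamma \sum_{s,t}\lambda(s,t) \sum_{s',t'}\lambda_{s,t}(s',t')\, d_n(s',t') \;=\; \sum_{s,t}\lambda(s,t)\cdot\gamma\, d_n(\mu_s,\nu_t^s) \;\le\; \sum_{s,t}\lambda(s,t)\, d_{n+1}(s,t) \;=\; d_{n+1}(\mu,\nu).
\end{align*}
This handles $\sup_{\TRANA{\mu}{a}{\mu'}}\inf_{\TRANA{\nu}{a}{\nu'}}\gamma\cdot D_n(\mu',\nu')$, and the symmetric term is identical. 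Taking the max over $a$ and over the three terms gives $F^{n+1}(\mathbf{0})(\mu,\nu) \le d_{n+1}(\mu,\nu)$, completing the induction.

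\textbf{Main obstacle.} The delicate point is the second step: verifying that the weight functions compose correctly — that the convex combination $\nu_t$ is genuinely a combined $a$-transition from $t$ and that the product weights $\lambda(s,t)\cdot\lambda_{s,t}(s',t')$ form a legitimate element of $\mu'\otimes\nu'$. This is where one must be careful with the normalization factors $\lambda(s,t)/\nu(t)$ (defined only when $\nu(t)>0$, which is fine since $\lambda(s,t)>0$ forces $t\in\supp(\nu)$) and check that marginals work out on both sides. Everything else is routine bookkeeping, and $D_f = \bigvee_n D_n \le \bigvee_n d_n = d_f$ follows at once. \qed
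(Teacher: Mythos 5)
Your proof is correct and follows essentially the same route as the paper's: induction on $n$ with $D_n \le d_n$, Lemma~\ref{lem:tmp11} for the labelling term, and composition of the optimal weight functions $\eta(u,v)=\sum_{s,t}\lambda(s,t)\lambda_{s,t}(u,v)$ to bound the transition terms. Your explicit handling of the $s$-dependence of the matching transitions (introducing $\nu_t^s$ and averaging them into $\nu_t=\sum_s\frac{\lambda(s,t)}{\nu(t)}\nu_t^s$) is in fact a more careful rendering of a step the paper writes with a single $\nu_t$ per $t$, so no gap remains.
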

\begin{proof}
We prove by induction on $n$ that  $D_n(\mu, \nu) \leq d_n(\mu, \nu)$ for any $\mu, \nu\in \dist(S)$ and $n\geq 0$. The case $n=0$ is obvious. Suppose the result holds for some $n-1\geq 0$. 
Then from Lemma~\ref{lem:tmp11}, we need only to show that for any $\TRANA{\mu}{a}{\mu'}$ there exists $\TRANA{\nu}{a}{\nu'}$
such that $\gamma\cdot D_{n-1}(\mu', \nu') \leq d_n(\mu, \nu)$.

Let $\TRANA{\mu}{a}{\mu'}$. Then for any $s\in S$, $\TRANPA{s}{a}{\mu_s}$ with $\mu' = \sum_{s\in S} \mu (s)\cdot \mu_s$.
By definition of $d_n$, for any $t\in S$, we have $\TRANPA{t}{a}{\nu_t}$ such that $\gamma\cdot d_{n-1}(\mu_s, \nu_t) \leq d_n(s, t)$. Thus
$\TRANA{\nu}{a}{\nu'} := \sum_{t\in S} \nu(t)\cdot \nu_t$, and by induction, $D_{n-1}(\mu', \nu')\leq d_{n-1}(\mu', \nu')$. Now it suffices to prove  $\gamma\cdot d_{n-1}(\mu', \nu')\leq d_n(\mu, \nu)$.

Let $\lambda\in \mu\otimes \nu$ and $\gamma_{s,t}\in \mu_s\otimes \nu_t$ be the weight functions such that
$$d_n(\mu, \nu)=\sum_{s,t\in S} \lambda(s,t)\cdot d_n(s, t),\ \  d_{n-1}(\mu_s, \nu_t)=\sum_{u, v\in S} \gamma_{s,t}(u,v)\cdot d_{n-1}(u, v).$$
Then 
\begin{eqnarray*}
d_n(\mu, \nu)&\geq &\gamma\cdot  \sum_{s,t\in S} \lambda(s,t)\cdot d_{n-1}(\mu_s, \nu_t)\\
&=&\gamma\cdot \sum_{u, v\in S}\sum_{s,t\in S} \lambda(s,t)\gamma_{s,t}(u,v)\cdot d_{n-1}(u, v).
\end{eqnarray*}
We need to show that the function $\eta(u, v) :=\sum_{s,t\in S} \lambda(s,t)\gamma_{s,t}(u,v)$ is a weight function for $\mu'$ and $\nu'$. It is easy to check that
\begin{eqnarray*}
\sum_u \eta(u, v) &=& \sum_{s,t\in S} \lambda(s,t)\sum_u\gamma_{s,t}(u,v)= \sum_{s,t\in S} \lambda(s,t)\nu_t(v) \\
&= & \sum_{t\in S} \nu(t)\nu_t(v) = \nu'(v).
\end{eqnarray*}
Similarly, we have $\sum_v \eta(u, v) = \mu'(u)$.
\qed
\end{proof}

\begin{example}
  Consider Fig.~\ref{fig:exam1}, and assume $\epsilon_1\ge
\epsilon_2>0$. Applying the definition of $D_b$, it is easy to check
that $D_b(\dirac{q},\dirac{q'})=0.5(\epsilon_1-\epsilon_2)\gamma$. By our
results, we have $D_l(\dirac{q},\dirac{q'}) =
D_f(\dirac{q},\dirac{q'}) = D_b(\dirac{q},\dirac{q'})$. Note that for
the discounting case $\gamma<1$, difference far in the future will
have less influence in the distance.

We further compute the distance under state-based bisimulation metrics
(see \cite{FernsPP11} for example). Assume that $\gamma=1$. One first
compute the distance between $r_1$ and $r'$ being
$\frac16+\epsilon_1$, between $r_2$ and $r'$ being
$\frac16+\epsilon_2$. Then, the state-based bisimulation metric
between $q$ and $q'$ is $\frac16+0.5(\epsilon_1+\epsilon_2)$, which
can be obtained by lifting the state-based metrics. 
\end{example}

\subsection{Comparison with Equivalence Metric}

Note that we can easily extend the equivalence relation defined in Definition~\ref{def:distribution_based}
to a notion of equivalence metric:
\begin{definition}[Equivalence Metric]
  Let $\A_i=(S_i,Act_i,\rightarrow_i,L_i,\alpha_i)$ with $i=1,2$ be
  two reactive automata with $Act_1=Act_2=:Act$, and $F_i=\{s\in
  S_i\mid L(s)=AP\}$ the set of final states for $\A_i$. We say
  $\A_1$ and $\A_2$ are $\epsilon$-equivalent, denoted $\A_1
  \sim_\epsilon^d \A_2$, if 
  for any input word $w=a_1a_2\ldots a_n$, 
  $|\A_1(w)-\A_2(w)| \le \epsilon$. Furthermore, the equivalence distance
  between $\A_1$ and $\A_2$ is defined by $D_d(\A_1, \A_2):=\inf\{\epsilon\geq 0\mid \A_1
  \sim_\epsilon^d \A_2\}$.
\end{definition}

Now we show that  for
reactive automata, the equivalence metric 
$D_d$ coincide with our undiscounted bisimulation metric $D_b$, which  
may be regarded as a quantitative extension of Lemma~\ref{lem:eqra}.
\begin{proposition}
Let $\A_1$ and $\A_2$ be two reactive automata with the same set of
  actions $Act$. Let the discount factor $\gamma =1$. Then
  $D_d(\A_1, \A_2) = D_b(\alpha_1, \alpha_2)$ where $D_b$ is defined in the direct sum of $\A_1$ and $\A_2$.
\end{proposition}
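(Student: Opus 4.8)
The plan is to reduce the statement to a description of the distribution transitions in reactive automata, where the bisimulation ``game'' degenerates because transitions become deterministic. Let $\A := \A_1 + \A_2$ be the direct sum; it is again reactive (all its states are labelled by $\emptyset$ or $AP$, and every state has, for each $a\in Act$, a unique $a$-transition), hence in particular input enabled, so $\sim$, $\sim_\epsilon$ and $D_b$ are defined on $\A$ without passing to an extension. In a reactive automaton the only $a$-transition of a state $s$ is $\TRANA{s}{a}{M(a)(s,\cdot)}$, and a combined transition $\TRANPA{s}{a}{\cdot}$ is a convex combination of these, hence again equals $M(a)(s,\cdot)$; so by Definition~\ref{def:dptran} the only transition of a distribution $\mu$ under $a$ is $\TRANA{\mu}{a}{\mu M(a)}$. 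In the direct sum $M(a)$ is block-diagonal with blocks $M_1(a)$ and $M_2(a)$. Writing $M(w) := M(a_1)\cdots M(a_k)$ for $w=a_1\cdots a_k\in Act^*$ (and $M(w)=I$ for the empty word), an easy induction shows that $\mu M(w)$ is the unique distribution reachable from $\mu$ along $w$.

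Next I would establish, for $\gamma=1$, the characterization
$$D_b(\mu,\nu)\;=\;\sup_{w\in Act^*} d_{AP}\!\left(\mu M(w),\,\nu M(w)\right).$$
For the inequality ``$\leq$'' one checks that the family $\{R_\epsilon\}_{\epsilon\ge 0}$ with $R_\epsilon := \{(\mu,\nu)\mid d_{AP}(\mu M(w),\nu M(w))\le\epsilon \text{ for all } w\in Act^*\}$ is an approximate bisimulation: condition (1) is the case of the empty word, and for condition (2) the only moves available are $\TRANA{\mu}{a}{\mu M(a)}$ and $\TRANA{\nu}{a}{\nu M(a)}$, with $(\mu M(a),\nu M(a))\in R_{\epsilon/\gamma}=R_\epsilon$ since $d_{AP}(\mu M(a)M(w),\nu M(a)M(w))=d_{AP}(\mu M(aw),\nu M(aw))\le\epsilon$. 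For the inequality ``$\ge$'', Theorem~\ref{thm:metric} gives $\mu\sim_{D_b(\mu,\nu)}\nu$, and a straightforward induction on $|w|$ applying conditions (2) then (1) of an approximate bisimulation (using $\gamma=1$) yields $d_{AP}(\mu M(w),\nu M(w))\le D_b(\mu,\nu)$ for every $w$. (Alternatively this identity can be read off from Theorem~\ref{thm:dfdb}, since for reactive automata $F^n(\z)(\mu,\nu)=\max_{|w|\le n}\gamma^{|w|} d_{AP}(\mu M(w),\nu M(w))$.)

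Finally I would evaluate both sides at $\alpha_1,\alpha_2$ inside $\A$. Since every state of $\A$ is labelled by $\emptyset$ or $AP$, for any distributions $\rho,\sigma$ of $\A$ we have $d_{AP}(\rho,\sigma)=\tfrac12\big(|\rho(\emptyset)-\sigma(\emptyset)|+|\rho(AP)-\sigma(AP)|\big)=|\rho(AP)-\sigma(AP)|$, using $\rho(\emptyset)+\rho(AP)=1$. Because $M(w)$ is block-diagonal, $\alpha_i M(w)$ stays supported in $S_i$ and there equals $\alpha_i M_i(w)$, so $(\alpha_i M(w))(AP)=\alpha_i M_i(w)\,\eta_{F_i}=\A_i(w)$; hence $d_{AP}(\alpha_1 M(w),\alpha_2 M(w))=|\A_1(w)-\A_2(w)|$. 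Combining with the characterization above,
$$D_b(\alpha_1,\alpha_2)=\sup_{w\in Act^*}|\A_1(w)-\A_2(w)|=\inf\{\epsilon\ge 0\mid \forall w,\ |\A_1(w)-\A_2(w)|\le\epsilon\}=D_d(\A_1,\A_2),$$
which is the claim. The only genuinely delicate point is the first step --- verifying that combined transitions give no extra freedom in the reactive setting, so that the bisimulation game collapses to comparing the deterministic trajectories $\mu\mapsto\mu M(w)$; the remainder is bookkeeping about $d_{AP}$ and the block structure of the direct sum.
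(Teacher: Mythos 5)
Your proof is correct, and it splits into one half that mirrors the paper and one half that takes a genuinely more elementary route. The key enabling observation --- that in a reactive automaton each state has a unique $a$-successor distribution, so combined transitions add nothing and the only distribution transition is $\TRANA{\mu}{a}{\mu M(a)}$ --- is sound and is indeed the crux; it collapses the bisimulation game to comparing the deterministic trajectories $\mu M(w)$. For the direction $D_b(\alpha_1,\alpha_2)\le D_d(\A_1,\A_2)$, your relation $R_\epsilon=\{(\mu,\nu)\mid \forall w.\ d_{AP}(\mu M(w),\nu M(w))\le\epsilon\}$ is essentially the paper's relation $\{(\mu,\nu)\mid \A_1^\mu\sim^d_\epsilon\A_2^\nu\}$ in different notation, and the verification is the same. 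For the converse direction the paper instead invokes the logical characterization $D_b=D_l$ (Theorem~\ref{thm:dbdl}), observing that $\A_i(w)$ is the value of the formula $\<a_1\>\cdots\<a_n\>\B$ with $\B=\{AP\}$ at $\alpha_i$; you avoid that machinery entirely, using only Theorem~\ref{thm:metric} ($\mu\sim_{D_b(\mu,\nu)}\nu$) plus an induction on $|w|$ that exploits determinism of the reactive transitions. Your route is more self-contained (it does not need the full modal-logic apparatus), at the cost of being specific to the reactive setting; the paper's route is a one-line corollary of a theorem it has already paid for. The remaining bookkeeping --- $d_{AP}(\rho,\sigma)=|\rho(AP)-\sigma(AP)|$ because labels lie in $\{\emptyset,AP\}$, the block-diagonal structure of $M(w)$ in the direct sum giving $(\alpha_iM(w))(AP)=\A_i(w)$, and the identification of $\inf\{\epsilon\mid\forall w.\ |\A_1(w)-\A_2(w)|\le\epsilon\}$ with the supremum --- matches the computations in the paper's proof and is correct.
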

\begin{proof}
We first show that $D_d(\A_1, \A_2) \leq D_b(\alpha_1, \alpha_2)$.
For each input word $w=a_1a_2\ldots a_n$, it is easy to check that $ \A_i(w)=\phi(\alpha_i)$
where $\phi= \langle a_1\rangle \langle
a_2\rangle\ldots \langle a_n\rangle (F_1\cup F_2)$.
As we have shown that $D_b=D_l$, it holds $|\A_1(w)-\A_2(w)| \le D_b(\alpha_1, \alpha_2)$, and hence $\A_1
  \sim_{D_b(\alpha_1, \alpha_2)}^d \A_2$. Then $D_d(\A_1, \A_2) \leq D_b(\alpha_1, \alpha_2)$ by definition. 
  
Now we turn to the proof of $D_d(\A_1, \A_2) \geq D_b(\alpha_1, \alpha_2)$. First we show that
$$R_\epsilon=\{ (\mu, \nu) \mid \mu\in \dist(S_1), \nu\in \dist(S_2), \A_1^\mu \sim_\epsilon^d \A_2^\nu\}$$
is an approximate bisimulation. Here for a probabilistic automaton $\A$, we denote by $\A^\mu$ the automaton which
is the same as $\A$ except that the initial distribution is replaced by $\mu$. Let $\mu R_\epsilon\nu$.
Since $L(s) \in \{\emptyset, AP\}$ for all $s\in S_1\cup S_2$, we have $\mu(AP) + \mu(\emptyset) = \nu(AP) + \nu(\emptyset) = 1$.
Thus
$$d_{AP}(\mu, \nu) =  |\mu(AP) - \nu(AP)|=  |\mu(F_1) - \nu(F_2)|.$$ 
Note that $\mu(F_1)=\A_1^\mu(e)$  and $\nu(F_2)=\A_2^\nu(e)$,  where $e$
is the empty string. Then $d_{AP}(\mu, \nu)=|\A_1^\mu(e)-\A_2^\nu(e)|\leq \epsilon$.

Let $\TRANA{\mu}{a}{\mu'}$ and $\TRANA{\nu}{a}{\nu'}$. We need to show $\mu' R_\epsilon\nu'$, that is, 
$\A_1^{\mu'} \sim_\epsilon^d \A_2^{\nu'}$. For any $w\in Act^*$, note that $\A_1^{\mu'}(w) = \A_1^{\mu}(aw)$. Then 
 $$|\A_1^{\mu'}(w)-\A_2^{\nu'}(w)| = |\A_1^{\mu}(aw)-\A_2^{\nu}(aw)|  \le \epsilon,$$
 and hence  $\A_1^{\mu'} \sim_\epsilon^d \A_2^{\nu'}$ as required. 
 
 Having proven that $R_\epsilon$ is an approximate bisimulation, we know $\A_1 \sim_\epsilon^d \A_2$ implies $\alpha_1\sim_\epsilon \alpha_2$.
 Thus $D_d(\A_1, \A_2) =\inf\{\epsilon \mid \A_1 \sim_\epsilon^d \A_2\} \geq \inf\{\alpha_1\sim_\epsilon \alpha_2\}=D_b(\alpha_1, \alpha_2)$.\qed
 \end{proof}
\section{Discussion and Future Work}\label{sec:conclusion}
In this paper, we considered Segala's automata, and proposed a novel
notion of bisimulation by joining the existing notions of equivalence
and bisimilarities. We have demonstrated the utility of our definition
by studying distribution-based bisimulation metrics, which have been
extensively studied for MDPs. 

As future work we would like to identify further solutions and techniques
developed in one area that could  inspire solutions for the corresponding
problems in the other area. This includes for instance decision
algorithm developed for equivalence
checking~\cite{Tzeng92,KieferMOWW11}, extensions to simulations, and
compositional verification for probabilistic automata.

\bibliographystyle{abbrv}
\bibliography{bib}

\appendix
\section{Omitted Proofs}

%

\renewcommand{\proofname}{Proof}

\subsection{Proof of Lemma \ref{lem:right}}

To prove Lemma \ref{lem:right}, we first introduce the notion of bounded approximation bisimulations.
\begin{definition} Let $\A$ be an input enabled probabilistic automaton.
We define symmetric relations
\begin{itemize}
\item $\ysim{\epsilon}_0 {:=} \dist(S)\times \dist(S)$ for any $\epsilon \geq 0;$
\item for $n\geq 0$, $\mu\ysim{\epsilon}_{n+1} \nu$ if $1 \leq \epsilon$, and whenever $\TRANA{\mu}{a}{\mu'}$, there exists $\TRANA{\nu}{a}{\nu'}$ for some $\nu'$ such that $\mu'\ysim{\epsilon/\gamma}_{n} \nu'$.
\item $\ysim{\epsilon} {:=} \bigcap_{n\geq 0} \ysim{\epsilon}_n$.
\end{itemize}
\end{definition}

The following lemma collects some useful properties of $\ysim{\epsilon}_n$ and $\ysim{\epsilon}$.

\begin{lemma}\label{lem:simlimit}
\begin{enumerate}
\item $\ysim{\epsilon}_n {\subseteq} \ysim{\epsilon}_m$ provided that $n \geq m$;
\item for any $n\geq 0$, $\ysim{\epsilon}_n {\subseteq} \ysim{\epsilon'}_n$ provided that $\epsilon \leq \epsilon'$;
\item for any $n\geq 0$, $\ysim{\epsilon}_n$ is continuous;
\item $\ysim{\epsilon} {=} \sim_\epsilon$.
\end{enumerate}
\end{lemma}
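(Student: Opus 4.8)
The four items split into two groups: items 1--3 are direct inductions on the structure of the bounded relations $\ysim{\epsilon}_n$, and item 4 ties the bounded approximation back to the approximate bisimilarity $\sim_\epsilon$. I would prove them in the stated order, since item 4 will want to invoke items 1--3.

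For item 1, I would induct on $m$. The base case $m=0$ is immediate since $\ysim{\epsilon}_0$ is the full relation $\dist(S)\times\dist(S)$. For the inductive step, assume $\ysim{\epsilon}_k\subseteq\ysim{\epsilon}_{k-1}$ for all smaller indices; if $\mu\ysim{\epsilon}_{n+1}\nu$ with $n+1\ge m$, then $1\le\epsilon$ holds and each $\TRANA{\mu}{a}{\mu'}$ is matched by some $\TRANA{\nu}{a}{\nu'}$ with $\mu'\ysim{\epsilon/\gamma}_{n}\nu'$; since $n\ge m-1$, the induction hypothesis gives $\mu'\ysim{\epsilon/\gamma}_{m-1}\nu'$, hence $\mu\ysim{\epsilon}_m\nu$. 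Item 2 is an even simpler induction on $n$: the base case is trivial, and in the step the condition $1\le\epsilon$ implies $1\le\epsilon'$, and $\epsilon/\gamma\le\epsilon'/\gamma$ lets the induction hypothesis upgrade the matching successors. Item 3 (continuity of $\ysim{\epsilon}_n$ for each $n$) is the one genuinely technical induction among the first three, and it mirrors the continuity argument already carried out in Theorem~\ref{thm:ld}: take convergent sequences $\mu_i\to\mu$, $\nu_i\to\nu$ with $\mu_i\ysim{\epsilon}_n\nu_i$; the condition $d_{AP}$-type constraint here is just $1\le\epsilon$, which is preserved; and for each transition $\TRANA{\mu}{a}{\mu'}$ we use left-convergence of $\TRANA{}{a}{}$ (Lemma~\ref{lem:tranld}) to pull it back to transitions $\TRANA{\mu_i}{a}{\mu_i'}$ with $\mu_i'\to\mu'$, match each with $\TRANA{\nu_i}{a}{\nu_i'}$ and $\mu_i'\ysim{\epsilon/\gamma}_{n-1}\nu_i'$, extract via compactness of $\dist(S)$ a convergent subsequence $\nu_{i_k}'\to\nu'$, conclude $\TRANA{\nu}{a}{\nu'}$ by continuity of the transition relation, and finally apply the induction hypothesis (continuity of $\ysim{\epsilon/\gamma}_{n-1}$) to obtain $\mu'\ysim{\epsilon/\gamma}_{n-1}\nu'$.

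For item 4, the inclusion $\sim_\epsilon\ \subseteq\ \ysim{\epsilon}$ is the easy direction: one shows by induction on $n$ that $\sim_\epsilon\ \subseteq\ \ysim{\epsilon}_n$, using that $\mu\sim_\epsilon\nu$ implies $d_{AP}(\mu,\nu)\le\epsilon$ and in particular (taking $\B = 2^{AP}$, or noting $d_{AP}\le 1$ always while the constraint $1\le\epsilon$ is what $\ysim{\epsilon}_{n+1}$ demands) the bound needed, together with the fact that $\sim_\epsilon$ is itself an approximate bisimulation so its transitions are matched by $\sim_{\epsilon/\gamma}$-related successors. The reverse inclusion $\ysim{\epsilon}\ \subseteq\ \sim_\epsilon$ is the crux: I would show that $\{\ysim{\epsilon}\mid\epsilon\ge 0\}$ is an approximate bisimulation. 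The $d_{AP}$ clause needs $d_{AP}(\mu,\nu)\le\epsilon$ from $\mu\ysim{\epsilon}\nu$; since $\ysim{\epsilon}\subseteq\ysim{\epsilon}_1$ we only directly get $1\le\epsilon$, but $d_{AP}\le 1\le\epsilon$ then suffices. The transition clause is where the real work is: given $\mu\ysim{\epsilon}\nu$ and $\TRANA{\mu}{a}{\mu'}$, for every $n$ we get some $\TRANA{\nu}{a}{\nu_n'}$ with $\mu'\ysim{\epsilon/\gamma}_{n}\nu_n'$; by compactness of $\{\omega\mid\TRANA{\nu}{a}{\omega}\}$ (it is compact, being the continuous image of a compact set) extract a subsequence $\nu_{n_k}'\to\nu'$ with $\TRANA{\nu}{a}{\nu'}$, and then argue $\mu'\ysim{\epsilon/\gamma}\nu'$, i.e. $\mu'\ysim{\epsilon/\gamma}_{m}\nu'$ for every $m$: fix $m$, and for all large $k$ we have $n_k\ge m$ so $\mu'\ysim{\epsilon/\gamma}_{m}\nu_{n_k}'$ by item 1, whence $\mu'\ysim{\epsilon/\gamma}_{m}\nu'$ by continuity of $\ysim{\epsilon/\gamma}_{m}$ (item 3). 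Thus $\mu'\ysim{\epsilon/\gamma}\nu'$ and the family is an approximate bisimulation, giving $\ysim{\epsilon}\subseteq\sim_\epsilon$.

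I expect the main obstacle to be item 4's hard direction, and specifically the diagonalization: a priori the matching successor $\nu_n'$ depends on $n$, so one cannot simply "take $n\to\infty$". The fix is the compactness-plus-monotonicity-plus-continuity combination sketched above — extract one convergent subsequence of the $\nu_n'$, then use $\ysim{\epsilon/\gamma}_{n_k}\subseteq\ysim{\epsilon/\gamma}_{m}$ for $n_k\ge m$ to put every tail of the subsequence inside the fixed level-$m$ relation, and close that level-$m$ relation under limits. Everything else is routine structural induction, closely paralleling arguments already present in the proofs of Lemma~\ref{lem:tranld} and Theorem~\ref{thm:ld}.
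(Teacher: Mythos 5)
Your proposal is correct and follows essentially the same route as the paper: items 1--3 and the inclusion $\sim_\epsilon\ {\subseteq}\ \ysim{\epsilon}$ by routine induction, and the converse by showing $\{\ysim{\epsilon}\mid\epsilon\geq 0\}$ is an approximate bisimulation, extracting a convergent subsequence of the matching successors $\nu'_n$ by compactness, invoking continuity of $\TRANA{}{a}{}$, and then combining monotonicity (item 1) with continuity of each $\ysim{\epsilon/\gamma}_m$ (item 3) to place the limit in every level $m$ --- the paper phrases this last step as a contradiction while you argue it directly, which is an immaterial difference. Note only that both you and the paper implicitly read the clause ``$1\leq\epsilon$'' in the definition of $\ysim{\epsilon}_{n+1}$ as ``$d_{AP}(\mu,\nu)\leq\epsilon$'' (the paper's proof asserts $d_{AP}(\mu,\nu)\leq\epsilon$ ``by definition''); under the literal reading item 4 would fail for $\epsilon<1$, so this is a typo to correct rather than a gap in your argument.
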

\begin{proof} Items 1, 2, and 3 are easy by induction, and so is the $ \sim_\epsilon {\subseteq} \ysim{\epsilon}$ part of Item 4. To prove $\ysim{\epsilon} {\subseteq} \sim_\epsilon$, we show that $\{\ysim{\epsilon} \mid \epsilon\geq 0\}$ is an  approximate bisimulation. Suppose $\mu\ysim{\epsilon}\nu$. Then $d_{AP}(\mu, \nu) \leq \epsilon$ by definition. Now let $\TRANA{\mu}{a}{\mu'}$.
 For each $n\geq 0$, from the assumption that $\mu\ysim{\epsilon}_{n+1}\nu$ we have $\TRANA{\nu}{a}{\nu_n}$ such that  $\mu'\ysim{\epsilon/\gamma}_{n}\nu_n$. Let $\{\nu_{i_k}\}_k$ be a convergent subsequence of $\{\nu_n\}_n$ such that $\lim_k \nu_{i_k} = \nu'$ for some $\nu'$. Then from the continuity of $\TRANA{}{a}{}$ we have $\TRANA{\nu}{a}{\nu'}$. We claim further that $\mu'\ysim{\epsilon/\gamma} \nu'$. Otherwise there exists $N$ such that 
$\mu'\not\ysim{\epsilon/\gamma}_{N}\nu'$. Now by the continuity of $\ysim{\epsilon/\gamma}_{N}$, we have $\mu'\not\ysim{\epsilon/\gamma}_{N}\nu_j$ for some $j\geq N$. This contradicts the fact that  $\mu'\ysim{\epsilon/\gamma}_j\nu_{j}$ and item 1. \qed
\end{proof}

\begin{lemma}\label{lem:tmp}
For any $n\geq 0$, we have $\mu \ysim{F^n(\z)(\mu, \nu)}_n \nu$.
\end{lemma}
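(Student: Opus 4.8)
The plan is an induction on $n$. For the base case $n=0$ there is nothing to prove: by definition $\ysim{\epsilon}_0 = \dist(S)\times\dist(S)$ for every $\epsilon\ge 0$, and since $F^0(\z)(\mu,\nu)=0$ we trivially obtain $\mu\ysim{F^0(\z)(\mu,\nu)}_0\nu$.

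For the inductive step I would assume $\mu'\ysim{F^n(\z)(\mu',\nu')}_n\nu'$ for all $\mu',\nu'\in\dist(S)$, fix $\mu,\nu$, and put $\epsilon := F^{n+1}(\z)(\mu,\nu) = F(F^n(\z))(\mu,\nu)$. Both clauses in the definition of $\ysim{\epsilon}_{n+1}$ then follow directly from the shape of the functional $F$. The first clause holds because $d_{AP}(\mu,\nu)\le F(F^n(\z))(\mu,\nu)=\epsilon$, immediate from the definition of $F$. For the second clause, take any transition $\TRANA{\mu}{a}{\mu'}$; unfolding $F$ gives
$$\epsilon \;\ge\; \sup_{\TRANA{\mu}{a}{\mu''}}\ \inf_{\TRANA{\nu}{a}{\nu''}}\ \gamma\cdot F^n(\z)(\mu'',\nu'') \;\ge\; \inf_{\TRANA{\nu}{a}{\nu''}}\ \gamma\cdot F^n(\z)(\mu',\nu'').$$
Invoking the remark following Definition~\ref{def:metricfunc} that this infimum over the compact set $\{\nu''\mid\TRANA{\nu}{a}{\nu''}\}$ is attained, I obtain a concrete transition $\TRANA{\nu}{a}{\nu'}$ with $F^n(\z)(\mu',\nu')\le\epsilon/\gamma$.

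It remains to convert this inequality into the required membership. The induction hypothesis gives $\mu'\ysim{F^n(\z)(\mu',\nu')}_n\nu'$, and since $F^n(\z)(\mu',\nu')\le\epsilon/\gamma$, monotonicity of the bounded approximation bisimilarity in its error parameter (Lemma~\ref{lem:simlimit}(2)) yields $\mu'\ysim{\epsilon/\gamma}_n\nu'$. Hence $\mu\ysim{\epsilon}_{n+1}\nu$, which closes the induction.

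I do not expect a genuine obstacle here; the only non-bookkeeping ingredient is the attainment of the infimum in the definition of $F$, which rests on compactness of $\{\nu''\mid\TRANA{\nu}{a}{\nu''}\}$ together with continuity of the pseudometric $F^n(\z)$ (the continuity analogue of Lemma~\ref{lem:continuous}), and this is precisely what is already noted after Definition~\ref{def:metricfunc}. If one prefers to bypass attainment, the alternative is to pick, for each $\delta>0$, a transition $\TRANA{\nu}{a}{\nu_\delta}$ with $\gamma\cdot F^n(\z)(\mu',\nu_\delta)<\epsilon+\delta$, extract a convergent subsequence as $\delta\to 0$, and conclude using continuity of $\TRANA{}{a}{}$ (Lemma~\ref{lem:tranld}) together with continuity of $\ysim{\epsilon/\gamma}_n$ (Lemma~\ref{lem:simlimit}(3)); but the direct argument via attainment is the cleaner route.
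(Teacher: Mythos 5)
Your proposal is correct and follows essentially the same route as the paper's own proof: induction on $n$, with the first clause read off from the $d_{AP}$ term in $F$ and the second obtained by unfolding the $\sup\inf$ term, using attainment of the infimum to extract a matching transition $\TRANA{\nu}{a}{\nu'}$, and then combining the induction hypothesis with Lemma~\ref{lem:simlimit}(2). The only cosmetic difference is that the paper writes the $\sup/\inf$ directly as $\max/\min$ (invoking the remark after Definition~\ref{def:metricfunc}), whereas you make that attainment step, and a compactness fallback for it, explicit.
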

\begin{proof} 
We prove this lemma by induction on $n$. The case of $n=0$ is trivial. Suppose $\mu \ysim{F^n(\z)(\mu, \nu)}_n \nu$ for some $n\geq 0$. Let $a\in Act$. By definition, we have  
$$F^{n+1}(\z)(\mu,\nu) \geq \max_{\TRANA{\mu}{a}{\mu'}} \min_{\TRANA{\nu}{a}{\nu'}} \gamma\cdot F^n(\z)(\mu',\nu').
$$
Thus for any $\TRANA{\mu}{a}{\mu'}$, there exists $\TRANA{\nu}{a}{\nu'}$ such that $\gamma\cdot F^n(\z)(\mu',\nu')\leq F^{n+1}(\z)(\mu,\nu)$. By induction, we know $\mu' \ysim{F^n(\z)(\mu', \nu')}_n \nu'$, thus $\mu' \ysim{ F^{n+1}(\z)(\mu,\nu)/\gamma}_n \nu'$ from Lemma~\ref{lem:simlimit}(2). On the other hand, we have $F^{n+1}(\z)(\mu,\nu) \geq d_{AP}(\mu, \nu)$ by definition.
Thus we have $\mu \ysim{ F^{n+1}(\z)(\mu,\nu)}_{n+1} \nu$.
\qed
\end{proof}

With the two lemmas above, Lemma~\ref{lem:right} follows easily.
\renewcommand{\proofname}{Proof of Lemma~\ref{lem:right}}
\begin{proof}
For any $\mu$ and $\nu$, by Lemmas~\ref{lem:tmp} and \ref{lem:simlimit}(2), we have $\mu \ysim{D_f(\mu, \nu)}_n \nu$ for all $n\geq 0$, so $\mu \ysim{D_f(\mu, \nu)} \nu$ by definition. Then from 
 Lemma~\ref{lem:simlimit}(4) we have
$\mu \sim_{D_f(\mu, \nu)} \nu$, hence $D_b(\mu, \nu)\leq D_f(\mu, \nu)$. \qed
\end{proof}

\end{document}